\newtheorem{thm}{Theorem}
\newtheorem{lem}[thm]{Lemma}
\newtheorem{cor}[thm]{Corollary}
\newtheorem{definition}[thm]{Definition}
\newcommand{\IR}{\mathbb{R}}
\newcommand{\IZ}{\mathbb{Z}}
\newcommand{\Tosc}{T_{\text{osc}}}
\newcommand{\Tctr}{T_{\text{ctr}}}
\DeclareMathOperator{\md}{md}
\newcommand{\mdrcv}{\md_{\text{rcv}}}
\newcommand{\mdsnd}{\md_{\text{snd}}}
\newcommand{\tauwrt}{\tau_{\text{s}}}
\newcommand{\taurd}{\tau_{\text{r}}}
\newcommand{\Oscrcv}{\textsc{Osc}_{\text{rcv}}}
\newcommand{\Oscsnd}{\textsc{Osc}_{\text{snd}}}
\newcommand{\clkrcv}{\text{clk}_{\text{rcv}}}
\newcommand{\clksnd}{\text{clk}_{\text{snd}}}
\newcommand{\clocked}{\textsf{ClockedTh}}
\newcommand{\cont}{\textsc{ContTh}}
\newcommand{\metas}{\text{\texttt{M}}}
\newcommand{\thresh}{\mathcal{T}}
\newcommand{\discclk}{C}
\newcommand{\realclk}{c}
\newcommand{\freqclk}{\dot{c}}
\newcommand{\rapointer}{P_r}
\newcommand{\sapointer}{P_s}
\newcommand{\rap}{p_r}
\newcommand{\drap}{\dot{p}_r}
\newcommand{\sap}{p_s}
\newcommand{\ctrl}{\textsc{Ctrl}}
\newcommand{\rcv}{\textsc{Rcv}}
\newcommand{\snd}{\textsc{Snd}}
\newcommand{\buff}{\textsc{Buff}}
\newcommand{\flevel}{\textit{fill}}
\newcommand{\TIME}{\ENSURE{}}
\newcommand{\figref}[1]{Fig.~\ref{#1}}
\newcommand{\drawsector}[6][]{
  \draw[#1] (#4:{#2-.5*#3}) arc [start angle = #4, delta angle=-#5, radius={#2-.5*#3}] --
      ++({#4-#5}:#3) arc [start angle = {#4- #5}, delta angle=#5, radius={#2+.5*#3}] --cycle;
  \draw[decorate,decoration={raise=-3pt, text along path, text=#6, text align={align=center}}]
      (#4:#2) arc(#4:(#4-#5):#2);
}
\begin{document}

\title{Synchronizer-free Digital Link Controller}

\author{Johannes Bund, Matthias~F\"ugger, Christoph Lenzen, Moti Medina%
  \thanks{Johannes Bund and Christoph Lenzen are with the Max Planck Institute for Informatics,
      Saarland Informatics Campus, Germany. Their work on this project has been supported by funding from the European Research Council (ERC) under the European Union's Horizon 2020 research and
innovation programme (grant agreement 716562). Email:~\texttt{\{jbund,clenzen\}@mpi-inf.mpg.de}}%
    \thanks{Johannes Bund is with the Saarbr\"ucken Graduate School of Computer Science, Germany.}%
    \thanks{Matthias~F\"ugger is affiliated with the CNRS \& LSV, ENS Paris-Saclay, Universit\'e Paris-Saclay \& Inria, France. He receives funding from DigiCosme and DEPEC MODE. Email:~\texttt{mfuegger@lsv.fr}}%
    \thanks{Moti Medina is with the Ben Gurion University of the Negev, Israel. He was partially supported  by the Israel Science Foundation grant No. 867/19. Email:~\texttt{medinamo@bgu.ac.il}}%
  \thanks{Manuscript received Mmmm DD, YYYY; revised Mmmm DD, YYYY.}%
}

\markboth{IEEE Transactions on Circuits and Systems~I,~Vol.~XX, No.~X, Mmmmm~YYYY}{Bund \MakeLowercase{\textit{et al.}}: Synchronizer-free Digital Link Controller}

\maketitle

\begin{abstract}
This work presents a producer-consumer link between two independent clock
domains. The link allows for metastability-free, low-latency, high-throughput
communication by slight adjustments to the clock frequencies of the producer
and consumer domains steered by a controller circuit.

Any such controller cannot deterministically avoid, detect, nor resolve
metastability. Typically, this is addressed by
synchronizers, incurring a larger dead time in the control loop.
We follow the approach of Friedrichs et al.~(TC 2018) who proposed
metastability-containing circuits. The  result is a simple control circuit that may
become metastable, yet deterministically avoids buffer underrun or overflow.
More specifically, the controller output may become metastable, but this may
only affect oscillator speeds within specific bounds.
In contrast, communication is guaranteed to remain metastability-free.

We formally prove correctness of the producer-consumer link and a possible
implementation that has only small overhead. With SPICE simulations of the
proposed implementation we further substantiate our claims. The simulation
uses 65\,nm process running at roughly 2\,GHz.
\end{abstract}

% Note that keywords are not normally used for peerreview papers.
\begin{IEEEkeywords}
producer-consumer link, digital controllers, continuous processes,
metastability-free, metastability-containing mixed signal control loop
\end{IEEEkeywords}

% BEGIN \input{intro.tex}
\section{Introduction}\label{sec:introduction}

Links that enable communication between different clock domains are an important ingredient in every Globally Synchronous Locally Asynchronous (GALS) system~\cite{teehan2007survey}.
This communication is performed in a ``producer-consumer'' manner:
  in one clock domain the producer pushes messages to the link, while in the other clock domain the consumer
  pulls messages from the other side.
Inherently, link implementations are susceptible to failures induced by metastable upsets;
even if such errors can be handled, they negatively impact the performance of the link.

Previous digital controller designs resort to different methods to deal with metastability:
  clock-masking~\cite{coates2003congestion}, clock-pausing \cite{mullins2007demystifying,najvirt2015synchronize},
  or adding synchronizers (while sacrificing latency) to maintain a realistic (yet finite)
  mean time between failures (MTBF) of the link~\cite{chelcea2004robust,coates2003congestion,dally2010even,dobkin2006high,jackson2016gradual}.
Downsides of these approaches are that synchronized fill level flags are inherently ``stale'' by the time they affect the
  system. This requires almost-full flags~\cite{coates2003congestion},
  long handshake latencies that increase the dead time and affect the latency and
  throughput, additional slack in a controller cycle accounting for metastability resolution
  time in controller's flip-flops or mutual exclusion (MUTEX) elements~\cite{dobkin2006high,jackson2016gradual}.

At the heart of the problems faced in these controllers lies the impossibility to solve discrete decision problems,
  e.g., writing to a cell at a certain clock tick or skipping a clock cycle, under
  continuous inputs (i.e., arbitrary phase shifts between producer and consumer clocks)
  within bounded time~\cite{Mar81}.
One way out of this impossibility is to resort to {\em end-to-end analog designs}, e.g., by letting
  an analog controller apply continuous phase shifts by (slightly) tuning the producer and/or
  consumer oscillator.
This comes at the burden of a fully-fledged analog design.
\begin{table}
  \caption{Performance and hardware overhead (buffer size $N$, gates, flip-flops, oscillator type) of the proposed
    controller with a tunable 2.0 to 2.3\,GHz oscillator, \cite{dally2010even}, and \cite{polzer2009metastability}.}
  \vspace{-0.4cm}
\label{table:circbounds}
\begin{center}%\scriptsize
\begin{tabular}{| c | c | c || c | c |}
\cline{3-5}
    \multicolumn{2}{c|}{} & this work & \cite{polzer2009metastability} & \cite{dally2010even}
\tabularnewline
\hline
 \multirow{3}{*}{Performance}
                                    & Latency [ns]  & $1$& $375$ & $1.3$
 \tabularnewline
                                    & Th.put [$\frac{\text{pkt}}{\text{ns}}$]  & $2$ & $\frac{1}{41}$ & $\frac{1}{1.3}$
 \tabularnewline
                                    & MTBF  & $\infty$ & $\infty$ & Finite
 \tabularnewline
\hline
 \multirow{4}{*}{Overhead}& $N$  & $2$ & $9$ & $2$
 \tabularnewline
 & \# Gates   & $8$ & $>100$ & $>100$
 \tabularnewline
                        & \# FFs  & $4$ & $>50$ & $>100$
 \tabularnewline
                                    & Osc.\,Type & tune & distr. & quartz \tabularnewline
\hline
\end{tabular}
\end{center}
\vspace{-0.4cm}
\end{table}

An interesting alternative was proposed in~\cite{sokolov2014towards}, where the authors advocate the
  use of asynchronous controllers, sensing and controlling analog processes.
With this approach, analog components are required at the controller interfaces only, and the controller
  itself is implemented by a digital asynchronous circuit.
For certain classes of controllers, this approach allows to completely circumvent metastable upsets within the
  controller circuit, essentially by allowing for the occurrence of (digital)
  controller outputs within a continuous time range, rather than at discrete
  clock ticks only.

%###
\subsubsection*{Contribution}
%###

We propose a fundamentally different approach, exemplifying it at the hand of
highly efficient link controllers:
  like~\cite{sokolov2014towards}, we replace large parts of a (conceptually) analog controller
  by standard digital circuitry.
However, we do \emph{not} resort to asynchronous circuits.
Instead, we allow unstable/metastable signal values within our circuit
  and treat them as a third ``logical'' value.
Clearly, care must be taken that such values do not ``infect'' the whole controller logic,
  leading to unconstrained control outputs.
For this purpose, we follow \cite{friedrichs18}, using the same worst-case
  propagation model and analysis to provably contain metastability.

Specifically, we propose a digital controller
  that drives tunable ring oscillators as presented in~\cite{ghai2009design} at the
  sender and receiver side and prove its correctness.
The controller is small in size, has low control latency and allows for small link buffers.
We show that this guarantees high throughput and low latency communication.
Most notably, while the controller may become metastable, we ensure that metastability is contained within
  the controller, and does not lead to metastable upsets, corruption, or drops of communicated data words
  in the ring buffer between the sender and receiver.
We complement our provable system guarantees with simulations (see Sec.~\ref{sec:sims}).

%###
\subsubsection*{Related Work and Comparison}\label{sec:related}
%###

There is a large body of work on links between clock domains, motivated by their central importance in GALS designs.
% We confine ourselves to presenting the closest related literature.
According to~\cite{teehan2007survey}, GALS systems can be classified by their clocking schemes: (i) pausible clocked systems, (ii) asynchronous systems with
  uncorrelated clocks, and (iii) loosely synchronous systems, with (partially) synchronized clocks.
We shortly review sender-receiver communication in these three approaches.

(i) \emph{Pausible clocking} overcomes synchronization issues by halting the clock until metastability is
  resolved~\cite{mullins2007demystifying}; e.g., the design in \cite{najvirt2015synchronize}
  guarantees no glitches on stopping and starting.
Metastability inside the control loop may lead to an arbitrary delay of the
final pulse on stopping. This requires that the clock cannot be started
  again before metastability has been resolved.

(ii) \emph{Uncorrelated clocks:} Communication between uncorrelated clock frequencies and phases is traditionally done
  by combining classical two-flop synchronizers with buffers and flow-control circuitry.
A downside of these approaches is that the latency and the throughput
are determined by the handshake cycle that has to include (at least)
two synchronizer cycles at both sides.
Clearly, also this approach has a non-zero upset probability and thus finite MTBF.
In~\cite{chelcea2004robust}, a mixed-clock first-in first-out pipeline (FIFO) with flow control logic is proposed.
Instead of classical handshaking, synchronized full/empty and almost full/empty signals are used.
The throughput is one data item per clock cycle until the almost full signal is raised;
  afterwards, the ``true'' full signal has to be considered, at the cost of increased latency and lower throughput.
The approach has finite MTBF.
In~\cite{coates2003congestion}, a ripple FIFO solution with almost full/empty signals is proposed.
The approach requires slow sender/receiver speeds compared to data propagation within the ripple FIFO.
Moreover, full/empty flags have to be synchronized, which leads to increased latency and finite MTBF.
In~\cite{dobkin2006high}, a locally delayed latching (LDL) approach is proposed: conflicting read/write operations are delayed by
  an asynchronous controller with a MUTEX element.
Controller latency is in the order of $20$ gate delays, and the minimum feasible clock cycle is no less than $69$ gate delays, accounting for
  enough time for the MUTEX to stabilize with high probability.
Gradual synchronization~\cite{jackson2016gradual} allows fine-grained interweaving of synchronization and computation, also shifting
  conflicting ripple FIFO requests by MUTEX elements at each stage.
Like synchronizer chains, this approach has finite MTBF that can be increased at the cost of higher latency.
Dally and Tell~\cite{dally2010even} propose a scheme in which the MTBF can be made arbitrarily large
  without increasing latency.
They use synchronizers to continually determine phase offsets between sender and receiver clocks only.
A drawback is that the frequency and phase measurement circuits require accurate phase tracking (64bit in their implementation)
  and can account for slow phase drifts only.

(iii) \emph{Loosely synchronous systems:} in contrast to (i) and (ii),
synchronizing clocks allows obtaining worst-case guarantees on latency and
throughput together with provable absence of metastable upsets. Our approach
also falls into this class. The closest work to our approach presumably is
proposed in \cite{polzer2009metastability}.
By using the distributed DARTS clock generation mechanism~\cite{FS12:DC}, a
buffer size of $9$ and latency of $9$ clock cycles was achieved for a
receiver-sender clock shift of $4$ ticks at around $25$\,MHz in an FPGA.
While these numbers clearly can be improved in ASIC designs, DARTS inherently is
slower than our approach.

Table~\ref{table:circbounds} shows a comparison of our controller with the
  most closely related works, \cite{polzer2009metastability} and~\cite{dally2010even} (cf.\ Section~\ref{sec:sims} for details).

Our link controller has some similarities to a phase locked loop (PLL) with an all-digital phase detector;
see e.g.\ \cite{CC03:pll,abadian2010new} for all-digital PLL designs.
We briefly summarize commonalities and differences in the
following.\footnote{We remark that the exposition does not rely on the
information given in this comparison, and it might be easier to follow the
comparison based on a more detailed understanding of our approach. Accordingly,
readers should feel free to skip to the next section and return later if
needed.}

Classical PPLs lock a slave clock to a typically more stable master clock.
In our case we do not distinguish between a slave and a master, but our controller treats both receiver and sender clocks equally;
  one might think of this as a ``peer-to-peer PLL''.
The reason is that our goal is not to stabilize the absolute frequency of a poor
clock by ensuring a bounded phase offset to a more stable master clock,
but rather to bound the phase offset between a sender and receiver clock of
similar quality. Additionally we provide lower and upper bounds on
the frequency of the clocks which are close to the frequency bounds free-running
oscillators of the same quality have.
For example, this is useful when communicating with the environment.

The initial stage of a classical PLL is a phase frequency detector (PFD), which measures the phase difference between the
  master and slave clock signals.
Designs range from conventional PFDs, which measure negative and positive phase offsets on separate binary output signals
  by producing pulses whose width is the negative/positive phase offset, to more advanced setups~\cite{pan2007fast}.
Phase differences are then either forwarded to charge pumps (analog PLLs) \cite{bashir2009fast} or converted to digital counter offsets (digital PLLs).
For the latter, an unstable phase difference poses a risk for increased power consumption and likelihood of metastable upsets; see \cite{abadian2010new},
  where a filter on phase difference signals for a low-power digital PLL is proposed.

In our case, there is a (digital) unary-encoded up/down-counter at the heart of the controller, allowing to measure the phase difference between
  both clocks.
Note that since our goal is not to lock to a highly stable oscillator, our design is much simpler: our circuit only determines
  whether the actual phase offset is larger or smaller than the desired phase offset.
It is also worth noting that, while our oscillators are analog components, our circuit relies on the ability to
  switch between ``fast'' and ``slow'' only.
This binary decision may become un- or metastable frequently.
In stark contrast to a classical digital PLL with a binary counter, this does not pose a problem for our design.
We ensure that the potentially metastable output signal of our controller is only used to control the oscillator.
The oscillator frequency is required to remain in the range spanned by the frequencies possible under stable operation
  (slow and fast mode) in presence of a metastable, or in general, unstable signal.
This is the case for starved inverter ring oscillators.

The use of local clocks in our design has a further advantage over locking to a centralized clock that is assumed to provide a highly
stable frequency reference.
In our system the sender and receiver are not impaired by the failure of the respective other's clock.
While correct communication between the two nodes inherently requires both oscillators to work
correctly, our design guarantees that if one of the oscillators fails, the respective other keeps
running within the same frequency bounds.
Potential top-level error-detection based on the (non-)communicated data then provides
adequate application-specific reaction to such scenarios.

%###
\subsubsection*{Organization of the paper}
%###
We start with presenting the problem of communication in a system of two nodes with controllable oscillators
in Section~\ref{sec:systemmodel}.
We then break the system down into modules, formally specifying their requirements.
Section~\ref{sec:contTH} discusses gate-level implementations of the modules, together with proofs that the implementations
  satisfy the formal requirements.
In Section~\ref{sec:sims}, we present simulations of our implementation at gate-level (VHDL) and transistor level (Spice).
The simulation results are consistent with our formally proven results, and allow to obtain detailed
  performance metrics.
We conclude in Section~\ref{sec:conclusion}.
% END \input{intro.tex}
% BEGIN \input{model.tex}
\section{System Specification and Model}\label{sec:systemmodel}
We specify the system requirements and functionality next.
The link (see  \figref{fig:link}) has three parts: (i)~tunable oscillators $\Oscsnd$ and $\Oscrcv$,
(ii)~a (ring) buffer \buff, and
(iii)~a buffer controller \ctrl.
The link enables communication between two parties, a sender \snd\ and a receiver \rcv, that
   interact with the link via prescribed interfaces, discussed later on.

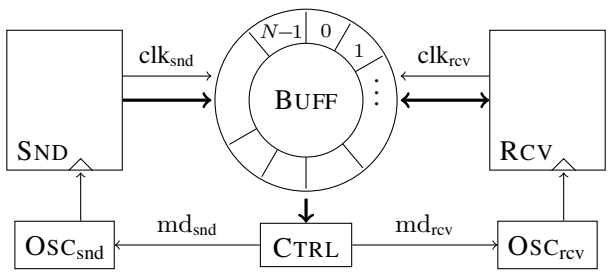
\begin{figure}\center
  \begin{tikzpicture}[scale=1.07,every node/.style={transform shape}]

  % sender and receiver
  \draw (0,0) node[rectangle,text width=1.2cm, text height=1.5cm,draw=black] (sender) {\snd};
  \draw (6,0) node[rectangle,text width=1.2cm, text height=1.5cm,draw=black] (receiver) {\rcv};

  % ring buffer
  \draw (3,0) node[circle,draw=black,inner sep=0.8cm] {};
  \node at (3,0) {\buff};
  \draw[-,black] (3,0) -- ++(90:1.1cm);
  \draw[-,black] (3,0) -- ++(-90:1.1cm);
  \draw (3,0) ++(74:0.9cm) node {\scriptsize $0$};

  \draw[-,black] (3,0) -- ++(130:1.1cm);
  \draw[-,black] (3,0) -- ++(-50:1.1cm);
  \draw (3,0) ++(110:0.9cm) node {\scriptsize $N\!\!-\!\!1$};

  \draw[-,black] (3,0) -- ++(60:1.1cm);
  \draw[-,black] (3,0) -- ++(-120:1.1cm);
  \draw (3,0) ++(43:0.9cm) node {\scriptsize $1$};
  \draw (3,0) ++(13:0.9cm) node {$\vdots$};

  \draw[-,black] (3,0) -- ++(30:1.1cm);
  \draw[-,black] (3,0) -- ++(-150:1.1cm);

  \draw (3,0) node[circle,draw=black,inner sep=0.5cm,fill=white] {};
  \node at (3,0) {\buff};

  % interconnect
  \draw[->,very thick] (sender.east) -- ++(1.1,0);
  \draw[<->,very thick] (receiver.west) -- ++(-1.1,0);

  \draw[->] (sender.east)++(0,0.3) -- ++(1.1,0) node[midway,above] {\small $\clksnd$};
  \draw[->] (receiver.west)++(0,0.3) -- ++(-1.1,0) node[midway,above] {\small $\clkrcv$};

  % osc
  \draw (0,-1.8) node[rectangle,text width=1cm, text height=0.3cm,draw=black] (oscS) {$\Oscsnd$\!};
  \draw (6,-1.8) node[rectangle,text width=1cm, text height=0.3cm,draw=black] (oscR) {$\Oscrcv$};

  % controller
  \draw (3,-1.8) node[rectangle,text width=0.9cm, text height=0.3cm,draw=black,align=center] (ctrl) {\ctrl};

  % interconnect
  \draw[->] (ctrl.west) -- (oscS.east) node[midway,above] {\small $\mdsnd$};
  \draw[->] (ctrl.east) -- (oscR.west) node[midway,above] {\small $\mdrcv$};

  \draw[->] (oscS.north)++(0.2,0) -- ++(0,0.6);
  \draw[->] (oscR.north)++(0.2,0) -- ++(0,0.6);

  \draw[-] (sender.south)++(0.05,0) -- ++(0.15,0.15) -- ++(0.15,-0.15);
  \draw[-] (receiver.south)++(0.05,0) -- ++(0.15,0.15) -- ++(0.15,-0.15);

  \draw[<-,very thick] (ctrl.north) -- ++(0,0.3);

\end{tikzpicture}
\caption{Link with Digital Controller}\label{fig:link}
\vspace{-0.3cm}
\end{figure}

The sender writes data to a ring buffer of even size $N > 0$, which is read by the receiver.
Cells are numbered from $0$ to $N-1$.
Read and write access is clocked: following transitions of its clock $\clksnd$,
the sender writes to the ring buffer. The register address is specified by the
current value of its address pointer, which it subsequently increments (modulo
$N$); likewise, following transitions of its clock $\clkrcv$, the receiver reads
from its current address and subsequently increments its pointer.

We remark that our design can easily be altered for bidirectional communication.
Each party needs to perform a read/write sequence instead of just a read (\rcv) (respectively
write (\snd)) operation when it is accessing a buffer cell;
the only effect is that the respective higher access time needs to be respected
in the timing constraints on the system. For ease of presentation, we stick to
the asymmetric setting in the following.

\subsection{Local Clocks}\label{sec:clockspec}%Clock Tuning

Sender and receiver clocks $\clksnd$ and $\clkrcv$ are derived from clock sources $\Oscsnd$ and $\Oscrcv$, respectively.
We require that these clock sources (or oscillators) are \emph{tunable in frequency} by the \emph{mode signals} $\mdsnd$ and $\mdrcv$.

Denote by $\discclk(t)\in \IZ$ a discrete clock value at wall-clock time $t\in \IR^+_0$.
This discrete clock is derived from a continuous clock $\realclk(t)\in \IR$ as $\discclk(t)\triangleq\lfloor \realclk(t) \rfloor$,
  with current frequency $\freqclk(t)$.
Let $\discclk_s(t),\discclk_r(t)$ be the discrete clock values of sender and receiver
  at wall-clock time $t$, and $\realclk_s(t), \realclk_r(t)$ their continuous clocks.
For properly chosen $\Tosc \geq 0$ and $\delta\leq 1$, we require:
\begin{enumerate}[(\bgroup\bfseries C1\egroup)]
    \item We assume that the clocks are started roughly at the same time:\footnote{For $\delta=1$, this is a fairly weak constraint. If sender and receiver each access one element of the ring buffer per clock cycle, it means that both oscillators are started within one clock cycle of each other. However, smaller values of $\delta$ may reduce the minimum feasible ring size by $2$ in some cases.} $\realclk_s(0),\realclk_r(0)\in (-\delta,0]$.
    \item If $\mdsnd$ ($\mdrcv$) is constantly $0$ during $[t-\Tosc,t]$, the sender (receiver) is in \emph{slow mode} at time~$t$ and $\freqclk_s(t)\in [s^-,s^+]$ ($\freqclk_r(t)\in [s^-,s^+]$).
    \item If $\mdsnd$ ($\mdrcv$) is constantly $1$ during $[t-\Tosc,t]$, the sender (receiver) is in \emph{fast mode} at time~$t$ and $\freqclk_s(t)\in [f^-,f^+]$ ($\freqclk_r(t)\in[f^-,f^+]$).
    \item If $\mdsnd$ ($\mdrcv$) is neither constantly $0$ nor constantly $1$ during $[t-\Tosc,t]$, the respective clock is \emph{unlocked} and $\freqclk_s(t)\in [s^-,f^+]$ ($\freqclk_r(t)\in [s^-,f^+]$).
    \item Clocks in slow mode are never faster than clocks in fast mode: $s^+\leq f^-$.
\end{enumerate}
Here, $\Tosc$ is the response time of the tunable oscillator. Note that our
requirements on the oscillator are fairly weak, making it easy to implement
(cf.~Section~\ref{sec:sims}): Only if the stable control signal is stable for
$\Tosc$ time, the oscillator needs to guarantee the respective rate. At any
other time, it is not locked to a fixed frequency mode and may run at any rate
between the slowest and fastest possible. This unlocked mode may be entered when
the control signal is ambiguous or transitioned recently, i.e., when both
parties are almost perfectly synchronized. The last condition is a minimal
requirement ensuring that the phase offset between the two clocks cannot
increase further when a clock in fast mode is chasing a clock in slow mode.

\subsection{Buffer Access Specification}\label{sec:bufferspec}%Buffer Access

Next, we specify buffer access in an abstract model with few parameters.
We assume that access to a buffer cell starts when the respective clock modulo $N$ (possibly with a fixed offset) equals the buffer index.
Note that this is a normalization of the time axis so that one computational cycle takes $1$ unit of ``local'' time as measured by the sender or receiver oscillator, respectively.\footnote{Note that this will typically not be $1$ unit of ``absolute'' time, as oscillator speeds may vary.} A computational cycle is defined by the local time between accessing consecutive buffers.

Intuitively, a buffer cell is \emph{valid} (i.e., ready to be read) if it contains stable, logical data and is currently not written.
A buffer cell is \emph{invalid} (i.e., ready to be written) if it is not valid and currently not read.
Formally:
\begin{enumerate}[(\bgroup\bfseries B1\egroup)]
    \item We define the receiver's (discrete) address pointer as $\rapointer(t)\triangleq \lfloor \rap(t) \rfloor \bmod N = \discclk_r(t)\bmod N$, where the receiver's (continuous) address pointer is $\rap(t) \triangleq \realclk_r(t)$.
    That is, the receiver starts to access cell $\ell$ at each time $t$ when $\rapointer(t)=\rap(t)\bmod N=\ell$.
    \item We define the sender's address pointer to be $\sapointer(t) \triangleq \lfloor \sap(t) \rfloor \bmod N$, where $\sap(t)\triangleq \realclk_s(t)+N/2$.
    That is, the sender pointer has a (nominal) offset of half the ring size relative to the receiver pointer.
    In the following, we will simply drop the ``starts to'' and say that the receiver (sender) \emph{accesses cell $\ell$ at time $t$} if $\rap(t)\bmod N = \ell$ ($\sap(t)\bmod N = \ell$).
    \item Read and write operations take non-zero time.
        We account for setup/hold times and latency by parameters $\tauwrt$ and $\taurd$, which denote the maximum ``durations'' of write and read operations.
        Concretely, if the sender accesses a cell at time $t$, the receiver must not do so during $[t,t+\tauwrt)$, and if the receiver accesses a cell at time $t$, the sender must not do so during $[t,t+\taurd)$.
    \item On initialization, cells $0\leq \ell < N/2$ are valid, while cells $N/2\leq \ell < N$ are invalid.
    If the sender accesses an invalid cell at time $t$, the cell \emph{becomes valid} at time $t+\tauwrt$.
    If the reader accesses a valid cell at time $t$, it \emph{becomes invalid} at time $t+\taurd$.
    This inductively defines for each cell and each time $t\geq 0$ whether it is valid or invalid.
\end{enumerate}
Note that these definitions are crafted in such a way that if the sender accesses only invalid cells and the reader accesses only valid cells, we have mutual exclusion of read and write operations and for each individual cell, reads and writes alternate.
This is the intended mode of operation, which we will formalize in Section~\ref{sec:correctdef}.

\begin{figure}[bt]
\centering
\begin{tikzpicture}[scale=1]

  % ring
  \draw ([shift=(65:2)]0,0) arc (65:-270:2);
  \draw[dotted] ([shift=(-270:2)]0,0) arc (-270:-280:2);

  \draw (0:1.9) -- ++(0:0.2) node[right] {\small 0}
    node[pos=1,xshift=20pt,yshift=-15pt] {\footnotesize $F_0(t) = 0$};

  \draw (-60:1.9) -- ++(-60:0.2) node[below,xshift=-3pt] {\small 1}
    node[pos=1,yshift=-5pt,xshift=25pt] {\footnotesize $F_1(t) = 0$};

  \draw (-120:1.9) -- ++(-120:0.2) node[below] {\small 2}
    node[pos=0,yshift=-25pt,xshift=-5pt] {\footnotesize $F_2(t) = 1$};

  \draw (-180:1.9) -- ++(-180:0.2) node[left] {\small 3}
    node[pos=1,yshift=-15pt,xshift=-20pt] {\footnotesize $F_3(t) = 1$};

  \draw (-240:1.9) -- ++(-240:0.2) node[above,xshift=3pt] {\small 4}
    node[pos=0,yshift=10pt,xshift=-30pt] {\footnotesize $F_4(t) = {\color{red}{ \metas}}$};

  \draw (-300:1.9) -- ++(-300:0.2) node[above,xshift=8pt] {\small $N-1$}
    node[pos=1,yshift=-3pt,xshift=35pt] {\footnotesize $F_{N-1}(t) = 0$};

  % p_r
  \draw[thick,->] (-65:1) -- ++(-65:0.5) node[pos=0,above,xshift=15pt] {\small $p_r(t-\tau_r)$};
  \draw[thick,->] (-105:1) -- ++(-105:0.5) node[pos=0,left,yshift=8pt,xshift=5pt] {\small $p_r(t)$};
  \drawsector[fill=red!30]{1.85}{0.3}{-105}{-40}{}
  \draw[->] ([shift=(-105:1.85)]0,0) arc (-105:-115:1.85);

  % p_s
  \draw[thick,->] (140:1) -- ++(140:0.5) node[pos=0,below] {\small $p_s(t-\tau_s)$};
  \draw[thick,->] (100:1) -- ++(100:0.5) node[pos=0,right] {\small $p_s(t)$};
  \drawsector[fill=red!30]{1.85}{0.3}{100}{-40}{}
  \draw[->] ([shift=(100:1.85)]0,0) arc (100:90:1.85);

  % P_r
  \draw[thick,->] (-60:2.7) -- ++(-60:-0.5) node[pos=0,below] {\small $P_r(t)$};

  % P_2
  \draw[thick,->] (120:2.7) -- ++(120:-0.5) node[pos=0,above] {\small $P_s(t)$};

\end{tikzpicture}
  \caption{Ring buffer access at time $t$. The sender currently accesses cell $4$.
    Hence, its full/empty flag is $\metas$.
    The receiver has just finished accessing cell $1$.
    Thus, its full/empty flag is $0$.
    In executions, we mark (potential) in-/metastability in red, cf.~\figref{fig:vhdl}.}\label{fig:buffer_access}
\end{figure}

\subsection{Metastability}

To minimize dead time of the control loop regulating the clock speeds, we do not make use of synchronizers.
Forgoing their use can result in meta-/unstable signals.
At any point in time, a signal has a value in $\{0,\metas,1\}$, where $\metas$ means that a signal is potentially metastable or
  in transition.
We employ a worst-case analysis, which assumes that $\metas$ propagates whenever possible;
  only explicit logical masking may protect from metastability, no probabilistic statements are used.

In particular, a flip-flop latching when its input is $\metas$ will ``store'' an $\metas$ until is latched again with a stable input.
Note that an output signal may also be unstable due to a transitioning signal, e.g.\ after latching a new value different from the previously stored one.

\subsection{Link Controller Interface Specification}\label{sec:contspec}%Controller

The mode signals themselves are generated by the controller \ctrl.
Controller decisions are based on full/empty flags of the ring-buffer cells, which we will describe shortly.
We stress that, inherently, the controller acts at the border of two clock domains.
Any digital implementation (including ours) is thus susceptible to metastable upsets.
Accordingly, the voltage levels of $\mdsnd$ and $\mdrcv$ may become meta-/unstable (between logical $0$ and $1$, denoted by $\metas$), as, in order to minimize delay, we do \emph{not} pipe them through a synchronizer
  chain before making use of them.

Let $\Tctr$ denote the maximum end-to-end delay of the controller circuit, i.e., between its input (the full/empty flags) and its output ($\mdsnd$ and $\mdrcv$).
The specification of the link controller's interface is as follows:
\begin{enumerate}[(\bgroup\bfseries L1\egroup)]
    \item If for $t\geq \Tctr$ the controller circuit specification maps the inputs during $[t-\Tctr,t]$ continuously to $1$ for signal $\mdsnd$, then $\mdsnd(t)=1$;
    analogous statements hold for output $0$ as well as signal $\mdrcv$ and outputs $0$ and $1$, respectively.
    \item In all other cases, the output at time $t$ is arbitrary, i.e., any value from $\{0,\metas,1\}$.
\end{enumerate}

\subsection{Full/Empty Flags}

With each buffer cell $\ell$, we associate a full/empty flag $F_{\ell}$.
It is specified as
\begin{enumerate}[(\bgroup\bfseries F1\egroup)]
\item $F_{\ell}(t)=1$ if the cell is valid at time $t$ and it either has not been accessed yet or the most recent access to it was by the sender;
\item $F_{\ell}(t)=0$ if the cell is invalid at time $t$ and it either has not been accessed yet or the most recent access to it was by the receiver;
\item if neither applies at time $t$, then $F_{\ell}(t)\in \{0,\metas,1\}$.
\end{enumerate}
In other words, we allow for the possibility that $F_{\ell}(t)=\metas$ at any point in time during read and write operations.

\figref{fig:buffer_access} depicts the state of the above described cell pointers
  at time $t$.
Observe that all cells between the sender and the receiver are full and thus their full/empty flags
  equal to to $1$, those between the receiver and the sender are empty with full/empty flags
  equal to $0$, and the flags of those currently accessed are $\metas$.
\subsection{System Correctness}\label{sec:correctdef}

Expressing the correct order of and separation in time between cell accesses,
we can now succinctly state what correct operation of the link architecture means.
\begin{definition}\label{def:correctsys}
A link is \emph{correct} if the following holds in any execution adhering to our model.
\begin{enumerate}[(\bgroup\bfseries P1\egroup)]
\item\label{P1} No underrun: the receiver accesses only valid cells.
\item\label{P2} No overflow: the sender accesses only invalid cells.
\end{enumerate}
\end{definition}
\begin{definition}\label{def:correctcont}
Controller \ctrl\ is \emph{correct } if it  computes the signals $\mdsnd$ and $\mdrcv$ out of the inputs $F_{\ell}$ so that the link is correct.
\end{definition}
The goal is now to design a (simple) controller that is correct even if the ring size $N$ is small: this minimizes both the size of the buffer and its latency.
% END \input{model.tex}
% BEGIN \input{continuous.tex}
\section{Continuous Threshold Controller}\label{sec:contTH}

Our control algorithm $\cont(\mathcal{T})$ is specified in Alg.~\ref{alg:controller}.
It is parametrized by $\thresh\in \IR^+$.
In the remainder of this section, we explain the intuition behind the approach.

For the purpose of exposition, denote by $\flevel(t) \triangleq \sap(t)-\rap(t) = N/2 + \realclk_s(t)-\realclk_r(t)$ the fill level of the buffer.
Recall that one of our design goals is to have a simple digital controller.
The most straightforward choice for such a control algorithm is presumably the threshold controller:
If the fill level of the ring buffer is larger than $N/2$, the sender is forced to slow and the receiver to fast mode.
If the fill level is less than $N/2$, the sender and receiver are forced into fast and slow mode, respectively.

However, as the various involved circuit components incur non-zero delays, we cannot expect instantaneous (and thus also not exact) information on the fill-level.
Also, changing the oscillators' speeds takes non-zero time, so we cannot hope for an immediate response to a small/large fill-level.
Alg.~\ref{alg:controller} takes this into account by introducing \emph{two thresholds}.
\begin{algorithm}[t]
\small
\begin{algorithmic}[1]
\TIME{}
  \STATE $\mdrcv(t) \gets$ choose arbitrarily in $\{0,\metas,1\}$
  \STATE $\mdsnd(t) \gets$ choose arbitrarily in $\{0,\metas,1\}$
  \IF{$\realclk_s(t)-\realclk_r(t) \ge \thresh$}
    \STATE $\mdrcv(t) \gets 1$ \hfill $//$ recall $\flevel(t) = N/2+\realclk_s(t)-\realclk_r(t)$
    \STATE $\mdsnd(t) \gets 0$
  \ENDIF
  \IF{$\realclk_r(t)-\realclk_s(t) \ge \thresh$}
    \STATE $\mdrcv(t) \gets 0$
    \STATE $\mdsnd(t) \gets 1$
  \ENDIF
\end{algorithmic}
\caption{Controller $\cont(\thresh)$}
\label{alg:controller}
\end{algorithm}
\figref{fig:nice} shows an execution where the controller \ctrl\ runs the algorithm.

\begin{figure}
  \begin{tikzpicture}
    % axis
    \draw[->] (0,-0.6) -- ++(0,2.3) node[above,xshift=0pt] {\small $\flevel(t)$};
    \draw[->] (0,-0.6) -- ++(7,0) node[right] {\small $t$};

    % bounds
    \draw[-,dotted,thick] (0,1.2) -- ++(6,0) node[right] {\small $\frac{N}{2}+\thresh$};
    \draw[-,dotted,thick] (0,0.0) -- ++(6,0) node[right] {\small $\frac{N}{2}-\thresh$};

    \draw[black] plot[smooth] coordinates
         {(0.2,0.2)
          (0.5,0.4)
          (0.8,0.6)
          (0.95,0.8)
          (1.1,1.0)
          (1.2,1.2)
          (1.6,1.4)
          (2.4,1.6)
          (2.9,1.6)
          (3.3,1.4)
          (3.9,1.2)
          (4.6,1.1)
          (5.0,1.4)
          (5.3,1.2)
          (5.6,1.0)
          (5.8,0.8)
          (6.3,0.6)
          (6.7,0.8)
          (7.0,1.0)
         };
    % controller
    \draw[<->] (1.2,-1) -- ++(0.8,0) node[midway,fill=white,xshift=0pt] {\scriptsize $\Tctr$}; % ends at 2
    \draw[<->] (4.8,-1) -- ++(0.8,0) node[midway,fill=white,xshift=0pt] {\scriptsize $\Tctr$}; % ends at 5.6

    % mode signal
    \draw (2,-1.125) rectangle ++(1.9,0.25) node[midway] {\scriptsize $1$};

    % mode labeling
    \draw (-0.55,-1.1) node[right,yshift=3pt] {\small $\mdrcv(t)$};

    % matching lines
    \draw[dotted,-] (1.2,-1.1) -- ++(0,2.2);
    \draw[dotted,-] (1.2,-1.1)++(2.7,0) -- ++(0,2.2);
    \draw[dotted,-] (4.8,-1.1) -- ++(0,2.2);
    \draw[dotted,-] (4.8,-1.1)++(0.5,0) -- ++(0,2.2);

    % osc labeling
    \draw (-0.55,-1.5) node[right,yshift=3pt] {\footnotesize $\Oscrcv$-mode};

    % osc
    \draw (2.9,-1.525) rectangle ++(1,0.25) node[midway] {\scriptsize fast};

    \draw[<->] (2,-1.4) -- ++(0.9,0) node[midway,fill=white,xshift=0pt] {\scriptsize $\Tosc$}; % ends at 2.9
    \draw[<->] (5.6,-1.4) -- ++(0.9,0) node[midway,fill=white,xshift=0pt] {\scriptsize $\Tosc$};
  \end{tikzpicture}
  \caption{$\cont(\thresh)$'s signals of the receiver.
  The fill-level increases until it hits $\frac{N}{2}+\thresh$, which makes the $\mdrcv$ signal drive $1$
    after $\Tctr$ time.
  After another $\Tosc$ time, the receiver and sender clocks are required to run in fast and slow mode,
    respectively (cf.\ Section~\ref{sec:systemmodel}).
  Note that the second phase during which the threshold $\frac{N}{2}+\thresh$ is crossed is too short for
    \ctrl\ and the oscillators to react with certainty.}\label{fig:nice}
  \vspace{-0.3cm}
\end{figure}

\subsection{Correctness of $\cont(\thresh)$}\label{sec:analysis}

Before we show that, for a $\thresh$ that is chosen sufficiently large, $\cont(\thresh)$ is \emph{implementable} by a digital circuit in Section~\ref{sec:clocked},
  we show that $\cont(\thresh)$ indeed is \emph{correct} (as per Definition \ref{def:correctcont}) if $\thresh$ is chosen small enough.
\begin{thm}\label{thm:correct}
$\cont(\thresh)$ is correct if
\begin{equation}\label{eq:T}
\begin{aligned}
\delta&\leq \thresh \\
&\leq N/2-(f^+-s^-)(\Tosc + \Tctr)-f^+ \max\{\tau_s,\tau_r\}.
\end{aligned}
\end{equation}
\end{thm}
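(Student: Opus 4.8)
The plan is to show the two correctness properties P1 (no underrun) and P2 (no overflow) by a bootstrapping / invariant argument on the fill level $\flevel(t) = N/2 + \realclk_s(t) - \realclk_r(t)$. By the symmetry of the setup (swapping the roles of sender and receiver and the sign of the offset), it suffices to prove one of the two; I would argue P2, showing the sender never accesses a valid cell, i.e.\ that whenever the sender accesses a cell, that cell is invalid. First I would establish the key reduction: since on initialization cells $0,\dots,N/2-1$ are valid and $N/2,\dots,N-1$ are invalid, and since $\sap(t) = \realclk_s(t)+N/2$, $\rap(t)=\realclk_r(t)$, a straightforward induction (using B3, the $\tauwrt,\taurd$ separation, and B4) shows that it is enough to prove the \emph{geometric} invariant that the sender pointer never ``laps'' the receiver pointer in the wrong direction and never collides within the setup/hold windows. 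Concretely, it suffices to show that for all $t\ge 0$,
\[
-(N/2 - f^+\max\{\tau_s,\tau_r\}) \;\le\; \realclk_s(t)-\realclk_r(t) \;\le\; N/2 - f^+\max\{\tau_s,\tau_r\},
\]
since then whenever the sender accesses cell $\ell$ (so $\sap(t)\bmod N = \ell$), the receiver's pointer is far enough ``behind'' that cell $\ell$ is still invalid, accounting for the fact that within the next $\taurd$ time the receiver advances by at most $f^+\taurd$ (C3, C5), and symmetrically for the write window.

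The core of the argument is then the differential/drift analysis of $d(t) \triangleq \realclk_s(t)-\realclk_r(t)$. I would argue by contradiction: suppose $t^*$ is the first time $d(t^*) = N/2 - f^+\max\{\tau_s,\tau_r\}$ (the other boundary is symmetric). Then for all $t<t^*$ we have $d(t) < N/2 - f^+\max\{\tau_s,\tau_r\}$, and in particular, using $\thresh \le N/2 - (f^+-s^-)(\Tosc+\Tctr) - f^+\max\{\tau_s,\tau_r\}$, I would show that $d$ must have crossed $\thresh$ at some time $t_0 \le t^* - (\Tctr + \Tosc)$ and stayed $\ge \thresh$ on all of $[t_0, t^*]$ — this is where the slack in the right inequality of \eqref{eq:T} is consumed: during an interval of length $\Tctr+\Tosc$ the fastest possible growth of $d$ is $(f^+-s^-)(\Tctr+\Tosc)$, so $d$ cannot jump from below $\thresh$ to $N/2 - f^+\max\{\tau_s,\tau_r\}$ without spending at least $\Tctr+\Tosc$ time at value $\ge\thresh$. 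Once $d \ge \thresh$ holds throughout $[t_0,t_0+\Tctr]$, Alg.~\ref{alg:controller} forces $\mdrcv = 1$ and $\mdsnd = 0$ throughout $[t_0+\Tctr,\,t^*]$ by L1; then after a further $\Tosc$ (C2, C3) the receiver is in fast mode and the sender in slow mode on $[t_0+\Tctr+\Tosc,\,t^*]$, so by C5 we get $\freqclk_r \ge f^- \ge s^+ \ge \freqclk_s$, hence $\dot d(t) \le 0$ on that interval, contradicting that $d$ strictly increases up to its first-passage time $t^*$. (One has to handle the short window $[t_0, t_0+\Tctr+\Tosc]$ separately, but its length is exactly the amount subtracted in \eqref{eq:T}, so $d$ cannot reach the boundary there either.)

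Finally, the left inequality $\delta \le \thresh$ enters through the initial condition: by C1, $\realclk_s(0),\realclk_r(0)\in(-\delta,0]$, so $|d(0)| < \delta \le \thresh$, which places $d$ strictly inside the ``dead zone'' $(-\thresh,\thresh)$ at time $0$; this is needed so that the first-passage argument above is not vacuously triggered at $t=0$ and so that the base case of the induction (cells correctly valid/invalid at startup, consistent with B4) goes through — in particular it guarantees that at time $0$ neither party is accessing a cell that the other is about to access within a setup/hold window. The main obstacle I anticipate is the bookkeeping in the reduction from the geometric invariant on $d(t)$ to the actual validity/invalidity of the accessed cells: one must carefully track, via the inductive definition in B4, that ``the most recent access alternates sender/receiver'' for each individual cell, and translate the continuous bound on $d$ into the discrete statement about $\sapointer,\rapointer \bmod N$ while correctly incorporating the $\tauwrt,\taurd$ delays and the worst-case frequency $f^+$ during those windows; the drift argument itself is comparatively routine once the right bootstrapping interval $\Tctr+\Tosc$ is identified.
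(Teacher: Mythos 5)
Your proposal is correct and follows essentially the same route as the paper: the key lemma is the bound $|\realclk_s(t)-\realclk_r(t)|\le N/2 - f^+\max\{\tau_s,\tau_r\}$, proved by a first-crossing/drift argument in which the left inequality $\delta\le\thresh$ guarantees the crossing time $t_0$ exists and the right inequality in~\eqref{eq:T} absorbs the $(f^+-s^-)(\Tctr+\Tosc)$ worth of growth before the controller and oscillator can react; the reduction to (P1)/(P2) is then the same pointer-separation bookkeeping you flag as the ``main obstacle,'' carried out in the paper by converting the invariant into $|\sap(t)-\rap(t)|\in[f^+\max\{\tau_s,\tau_r\},\,N-f^+\max\{\tau_s,\tau_r\}]$ and concluding $|t_r-t_s|\ge\max\{\tau_s,\tau_r\}$ via $\drap\le f^+$. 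The only cosmetic differences are that the paper handles both signs at once via the absolute value rather than invoking symmetry, and phrases the drift argument as a direct contradiction (assume $d(t)$ strictly exceeds the bound, then show $d(t)\le\thresh+(f^+-s^-)(\Tctr+\Tosc)$) instead of your first-passage formulation; the substance is identical.
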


 Recall that $\rap(t)=\realclk_r(t)$ and $\sap(t)=\realclk_s(t)+N/2$.
 Thus, when perfectly synchronized, the sender and receiver concurrently access opposite cells of the buffer.
 The first subtrahend accounts for the fact that the clocks remain unconstrained for $\Tosc+\Tctr$ time even after a threshold is reached:
 the controller guarantees corresponding output only after $\Tctr$ time, which is bound to affect clock speeds at most another $\Tosc$ time later;
 during this time period, one clock may ``catch up'' to the other at rate $f^+-s^-$.
 The second subtrahend accounts for the fact that the sender must always access a cell at least $\tau_r$ time before the receiver, while the receiver must do so $\tau_s$ time before the sender (B3).

 Note that these two conditions become fully symmetric when using $\max\{\tau_s,\tau_r\}$ as the minimum required separation between accesses.
 Translating this wall-clock time difference to the address pointers using the upper bound of $f^+$ on clock frequencies, we see that the following lemma is the key to showing Theorem~\ref{thm:correct}.
\begin{lem}\label{lem:bound}
If Eq.~\eqref{eq:T} holds, then
$$\forall t\in \IR^+_0\colon |\realclk_s(t) - \realclk_r(t)| \le N/2 - f^+ \max\{\tau_s,\tau_r\}\:.$$
\end{lem}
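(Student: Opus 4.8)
The plan is to prove the stronger one-sided statement $\realclk_s(t)-\realclk_r(t)\le N/2-f^+\max\{\tau_s,\tau_r\}=:B$ for all $t\ge 0$; the matching lower bound then follows verbatim by the symmetry of the model and of $\cont(\thresh)$ under exchanging sender and receiver (this swap sends lines~3--6 of Alg.~\ref{alg:controller} to lines~7--10 and leaves (C1)--(C5) and the value $B$ invariant). Write $D(t):=\realclk_s(t)-\realclk_r(t)$; since $\freqclk_x=\dot{\realclk}_x$ we have $D(t)=D(t')+\int_{t'}^t(\freqclk_s(s)-\freqclk_r(s))\,ds$, so $D$ is continuous, and since $\freqclk_s(s),\freqclk_r(s)\in[s^-,f^+]$ at all times by (C2)--(C4), the crude rate bound $|D(t)-D(t')|\le(f^+-s^-)|t-t'|$ always holds. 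By (C1), $|D(0)|<\delta\le\thresh$, and by Eq.~\eqref{eq:T}, $\thresh\le B-(f^+-s^-)(\Tctr+\Tosc)\le B$.

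The technical heart is a ``delayed response'' claim: if $D(s')\ge\thresh$ for every $s'\in[s-(\Tctr+\Tosc),\,s]$, then $\freqclk_r(s)\in[f^-,f^+]$ and $\freqclk_s(s)\in[s^-,s^+]$, whence $\freqclk_s(s)-\freqclk_r(s)\le s^+-f^-\le 0$ by (C5). To see this, note that on that window $D\ge\thresh>0>-\thresh$, so line~3 of Alg.~\ref{alg:controller} fires and line~7 does not; thus the controller specification maps its inputs to $\mdrcv=1$ and $\mdsnd=0$ throughout $[s-(\Tctr+\Tosc),s]$. For every $s''\in[s-\Tosc,s]$ the sub-window $[s''-\Tctr,s'']$ still lies in that interval, so (L1) yields $\mdrcv(s'')=1$ and $\mdsnd(s'')=0$; hence $\mdrcv\equiv 1$ and $\mdsnd\equiv 0$ on all of $[s-\Tosc,s]$, and (C3) resp.\ (C2), applied over this length-$\Tosc$ window to the receiver resp.\ sender, give the asserted frequency ranges.

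With the claim available I would run a ``last exit from the stable band'' argument. Assume for contradiction that $D(t)>B$ for some $t$. Since $D(0)<\thresh\le B<D(t)$ and $D$ is continuous, $t_0:=\sup\{t'\le t:D(t')\le\thresh\}$ satisfies $t_0<t$, $D(t_0)=\thresh$, and $D>\thresh$ on $(t_0,t]$. If $t-t_0\le\Tctr+\Tosc$, the rate bound gives $D(t)\le\thresh+(f^+-s^-)(\Tctr+\Tosc)\le B$, a contradiction. Otherwise $t_0+\Tctr+\Tosc<t$; the rate bound on $[t_0,t_0+\Tctr+\Tosc]$ gives $D(t_0+\Tctr+\Tosc)\le\thresh+(f^+-s^-)(\Tctr+\Tosc)\le B$, while for every $s\in[t_0+\Tctr+\Tosc,\,t]$ the window $[s-(\Tctr+\Tosc),s]\subseteq[t_0,t]$ lies in the region where $D\ge\thresh$ (note $s\ge t_0+\Tctr+\Tosc\ge\Tctr+\Tosc$, so the window is non-negative and (L1) applies), so the claim gives $\freqclk_s(s)-\freqclk_r(s)\le 0$; integrating, $D(t)\le D(t_0+\Tctr+\Tosc)\le B$, again a contradiction.

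The one step needing genuine care is the delayed-response claim: the interval bookkeeping (a length-$\Tctr$ window for (L1) nested inside a length-$(\Tctr+\Tosc)$ window whose trailing length-$\Tosc$ part feeds (C2)/(C3)), and, crucially, verifying that $D\ge\thresh$ makes the controller specification \emph{unambiguous} — it excludes the competing branch of Alg.~\ref{alg:controller} — so that (L1) applies with a constant target value and the oscillator is not left in the unlocked mode of (C4). Everything else is elementary arithmetic, and it is precisely this arithmetic that forces the two bounds on $\thresh$ in Eq.~\eqref{eq:T}: $\thresh\ge\delta$ to absorb the initial offset from (C1), and $\thresh\le B-(f^+-s^-)(\Tctr+\Tosc)$ to absorb the worst-case drift at rate $f^+-s^-$ during the $\Tctr+\Tosc$ reaction time.
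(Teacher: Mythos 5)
Your proof is correct and follows essentially the same last-exit/contradiction structure as the paper: pick the time $t_0$ at which $\realclk_s-\realclk_r$ last crosses $\thresh$ from below, bound the drift on the reaction window $[t_0,t_0+\Tctr+\Tosc]$ by $(f^+-s^-)(\Tctr+\Tosc)$, and show the rate is nonpositive thereafter via (L1) and (C2)/(C3)/(C5). Your ``delayed response'' claim merely packages more explicitly the nested-window bookkeeping that the paper does inline, and the symmetry remark for the other sign matches the paper's closing sentence.
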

\begin{proof}
Assume for contradiction that $\realclk_s(t)-\realclk_r(t)>N/2 - f^+ \max\{\tau_s,\tau_r\}> \thresh$ for some time $t$.
Let $t_0 \in \IR^+_0$ be the minimal time such that $\realclk_s(\tau)-\realclk_r(\tau)\geq \thresh$ for all $\tau \in [t_0,t]$;
as $|\realclk_s(0)-\realclk_r(0)|<\delta\leq \thresh$ by (C1) and \eqref{eq:T} and both $\realclk_s$ and $\realclk_r$ are continuous, such a time $t_0$ must exist.
Observe that $\realclk_s(t_0)-\realclk_r(t_0)=\thresh$.

By the specification of the controller (L1), we have that $\mdsnd(\tau)=0$ and $\mdrcv(\tau)=1$ for all $\tau \in [t_0+\Tctr,t]$.
Thus, we have that $\freqclk_r(\tau)\geq f^-\geq s^+\geq \freqclk_s(\tau)$ for all $\tau\in [t_0+\Tctr+\Tosc,t]$ by the specification of the clocks ((C2), (C3), (C5), and (C6)).
Recall that also $\freqclk_r(\tau)\geq s^-$ and $\freqclk_s(\tau)\leq f^+$ at all times $\tau$ by (C2) to (C5).
If $t-t_0\geq \Tctr+\Tosc$, we can thus bound
\begin{align*}
&\realclk_s(t)-\realclk_r(t)
=\realclk_s(t_0)-\realclk_r(t_0)+\int_{t_0}^t \freqclk_s(\tau)-\freqclk_r(\tau)~d\tau\\
&\!\!\leq \thresh + \int_{t_0}^{t_0+\Tctr+\Tosc}f^+-s^-~d\tau
+ \int_{t_0+\Tctr+\Tosc}^{t}0~d\tau\\
&\!\!\leq \thresh + (f^+-s^-)(\Tctr+\Tosc)\stackrel{\eqref{eq:T}}{\leq}\frac{N}{2} - f^+ \max\{\tau_s,\tau_r\}.
\end{align*}
If $t-t_0< \Tctr+\Tosc$, the second part of the integral vanishes and the first part becomes smaller, showing that the same bound holds.
Either way, this contradicts our assumption that $\realclk_s(t)-\realclk_r(t)$ exceeds this bound.

Finally, we argue analogously for the case that $\realclk_r(t)-\realclk_s(t)>N/2 - f^+ \max\{\tau_s,\tau_r\}$, where the roles of sender and receiver are exchanged.
\end{proof}

 \begin{proof}[Proof of Theorem~\ref{thm:correct}]
 By Lemma~\ref{lem:bound},
 \begin{equation}\label{eq:no_overtake}
 \begin{aligned}
 &\quad\,|\sap(t)-\rap(t)|=|\realclk_s(t)+N/2-\realclk_r(t)|\\
 &\in[f^+ \max\{\tau_s,\tau_r\},N-f^+ \max\{\tau_s,\tau_r\}].
 \end{aligned}
 \end{equation}
 In particular, the (continuous) sender and receiver address pointers never have the same value modulo $N$ and thus cannot ``pass'' each other.
 Moreover, by our assumptions on the initial clock values~(C1), and since $\delta\leq 1$, we have that $\realclk_s(0),\realclk_r(0)\in (-1,0]$, i.e., $\rap(0)\in (-1,0]$ and $\sap(0)\in (N/2-1,N/2]$ by (B1) and (B2), respectively.
 Together with (B4), this implies that (i) the first access to each cell that is invalid at time $0$ is by the sender, (ii) the first access to each cell that is valid at time $0$ is by the receiver, and (iii) each cell is accessed alternatingly by sender and receiver.

 It remains to show that the receiver does not access a cell less than $\tau_s$ time after a sender access to the same cell.
 Similarly, we need to show that the sender does not access a cell less than $\tau_r$ time after a receiver access.
 To this end, suppose cell $\ell$ is accessed by the sender and receiver at times $t_s$ and $t_r$, respectively.
 Thus, $\ell = \rap(t_r)+aN=\sap(t_s)+bN$ for some $a,b\in \IZ$, i.e.,
 \begin{align*}
 |\rap(t_r)-\rap(t_s)|&=|\sap(t_s)-\rap(t_s)+(b-a)N|\\
 &\stackrel{\eqref{eq:no_overtake}}{\geq} f^+ \max\{\tau_s,\tau_r\}.
 \end{align*}
 As $\drap(t)=\dot{\realclk}_r(t)\leq f^+$ at all times $t$ by (C2) to (C5), we also have $|\rap(t_r)-\rap(t_s)|\leq f^+|t_r-t_s|$ and therefore
 $|t_r-t_s|\geq \max\{\tau_s,\tau_r\}$.
 Thus, (P1) and (P2) are satisfied for any access to cell $\ell$;
 since $\ell$ was arbitrary, this completes the proof.
 \end{proof}

\subsection{Clocked Implementation \clocked}\label{sec:clocked}

Next, we provide a simple and efficient controller implementation that works if $\thresh$ is sufficiently large.
Recall that our goal is to detect when $c_s(t)-c_r(t)\geq \thresh$ or $c_r(t)-c_s(t)\geq \thresh$.
By Lemma~\ref{lem:bound}, assuming a correct implementation satisfying \eqref{eq:T}, it holds that the address pointers never reach each other.
Together with the equality $c_r(t)-c_s(t) = p_r(t) + N/2 - p_s(t)$, it follows that all we need to check is whether one pointer is more or less than $N/2$ cells ``ahead'' of the other or not.
This gives us an indication of whether the buffer is more or less than half full, and the more accurately we can decide, the smaller $\thresh$ can be for the implementation to be correct.

We use the receiver's clock to sample whether the sender's address pointer is
currently by more or less than $N/2$ cells ahead of the receiver's address pointer.\footnote{It is worth noting that one could use a purely combinational controller to achieve the same result, i.e., there is no need to rely on clocking. Making use of the clock does also not guarantee that stable values are sampled. However, making use of the clock results in a controller with smaller threshold value than a straightforward combinational implementation due to the known alignment of the sampling times with one of the clocks.}
This is where the full/empty flags come in handy.
Instead of having to communicate and sample $c_s(t)$, the receiver simply samples the flag of cell $\ell + N/2 \bmod N$ when accessing cell $\ell \in [N]$.
This occurs at each time $t$ when $\ell = p_r(t) \bmod N = c_r(t) \bmod N$, which means that if the buffer is exactly half full, we had that $p_s(t) \bmod N = \ell + N/2 \bmod N$, i.e., the sender accesses cell $\ell + N/2 \bmod N$ at precisely the same time.
This means that it starts setting the full/empty flag of the cell from $0$ to $1$ at time $t$, i.e., if the buffer is less than half full, the receiver will successfully sample a stable $0$ into flip-flop ffa,\footnote{For simplicity, we attribute any unstable reading to the transition of the memory flag of the cell via $\tau_s$. However, of course the parameters of the flip-flop we sample into, quality of the clock signal, and the delay from the flag's output to the flip-flop's input through the MUX all have an effect. Based on a timing analysis of the circuit and adding a suitable phase shift to the clock input of ffs by, e.g., using a buffer, the abstract behavior we assume can be realized. $\tau_s$ then simply describes the size of the time window during which ffs is vulnerable to metastability induced by a transition of the memory flag of cell~$\ell$.} see \figref{fig:control_clocked}.

In contrast, if the buffer is more than half full, it may be the case that the receiver ``reads'' an $\metas$, because the sender is still writing the full/empty flag.
Only if it accessed the cell at the latest at time $t-\tau_s$, we can be certain that the result of the read operation is a stable $1$.
To avoid this asymmetry, we sample cell $\ell$ at times $t$ when $c_r(t) \bmod N = \ell + f^+\tau_s/2$.
\begin{lem}\label{lem:clocked_ctrl}
Suppose time $t$ and $\ell \in [N]$ are such that $c_r(t) \bmod N = \ell + f^+\tau_s/2$ and Eq.~\eqref{eq:no_overtake} holds. Then
\begin{enumerate}[(i)]
  \item $c_s(t)-c_r(t)\geq f^+\tau_s/2\Rightarrow F_{\ell}(t)=1$, and
  \item $c_r(t) - c_s(t)\geq f^+\tau_s/2\Rightarrow F_{\ell}(t)=0$.
\end{enumerate}
\end{lem}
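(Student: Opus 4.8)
The plan is to read off $F_\ell(t)$ from the \emph{history of accesses to cell $\ell$}, by locating the two continuous pointers $\rap,\sap$ on the ring at time $t$. Recall that, exactly as in the proof of Theorem~\ref{thm:correct}, Eq.~\eqref{eq:no_overtake} together with (C1) and (B4) already yields (P1), (P2), and the fact that accesses to every cell strictly alternate between sender and receiver; throughout I would use Eq.~\eqref{eq:no_overtake} not only at $t$ but at all earlier times (this is how it is available, being the $\forall t$ consequence of Lemma~\ref{lem:bound}), so that it may be invoked at the earlier access times below. Given this, (F1) gives $F_\ell(t)=1$ as soon as we know that the most recent access to cell $\ell$ up to time $t$ was a \emph{write by the sender}, performed at some $t_s\le t-\tau_s$: then by (P2) it wrote an invalid cell, by (B4) the cell becomes valid at $t_s+\tau_s\le t$, and — no subsequent access having occurred — it is still valid at $t$. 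Symmetrically, (F2) gives $F_\ell(t)=0$ if the most recent access was a \emph{read by the receiver} at some $t_r\le t-\tau_r$. (The ``not yet accessed'' alternatives in (F1)/(F2) are a separate, easier case, disposed of at the end from the initial configuration.)

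First I would locate the receiver. Reading the hypothesis with the convention that $\ell$ denotes the cell whose flag is sampled (the one half a ring ahead of the receiver's current cell), the rule $\realclk_r(t)\bmod N=\ell+f^+\tau_s/2$ places $\rap(t)=\realclk_r(t)$ at a fixed offset from cell $\ell$, so the receiver's most recent access to cell $\ell$ occurred at a time $t_r$ with $\realclk_r(t)-\realclk_r(t_r)=N/2+f^+\tau_s/2$. Since $\drap=\dot{\realclk}_r\le f^+$ everywhere and $f^+\max\{\tau_s,\tau_r\}\le N/2$ (the interval in Eq.~\eqref{eq:no_overtake} is nonempty), this already gives $t-t_r\ge N/(2f^+)\ge\max\{\tau_s,\tau_r\}$. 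Next I would locate the sender in case~(i): from $\realclk_s(t)-\realclk_r(t)\ge f^+\tau_s/2$ and Eq.~\eqref{eq:no_overtake} one gets $\sap(t)-\rap(t)\in[N/2+f^+\tau_s/2,\;N-f^+\max\{\tau_s,\tau_r\}]$, equivalently $\realclk_s(t)-\realclk_r(t)\in[f^+\tau_s/2,\;N/2-f^+\max\{\tau_s,\tau_r\}]$. Translating to a position, the sender's most recent access to cell $\ell$ was at a time $t_s$ with $\realclk_s(t)-\realclk_s(t_s)=f^+\tau_s/2+(\realclk_s(t)-\realclk_r(t))\in[f^+\tau_s,\;N/2)$; hence $\dsap=\dot{\realclk}_s\le f^+$ gives $t-t_s\ge\tau_s$, and this access lies less than half a ring back in the sender's clock. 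Case~(ii) is entirely symmetric with the roles of sender/receiver (and full/empty) swapped: here $\sap(t)-\rap(t)\in[f^+\max\{\tau_s,\tau_r\},\;N/2-f^+\tau_s/2]$, the receiver's access at $t_r$ satisfies $t-t_r\ge\tau_r$ (using $f^+\tau_r\le N/2$), and the sender's last access to cell $\ell$ lies more than half a ring back in its clock.

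The crux — and the step I expect to be the main obstacle — is to show that among $t_r$ and $t_s$ the \emph{more recent} access is the claimed one (the sender's in case~(i), the receiver's in case~(ii)), i.e.\ that the other party performs no access to cell $\ell$ in the intervening window. This is not immediate, because over $[t_s,t]$ (resp.\ $[t_r,t]$) the two clocks may advance at different rates, so local distances cannot be compared directly. The resolution is to apply Eq.~\eqref{eq:no_overtake} \emph{at the earlier access time}: in case~(i), the exact value of $\realclk_s(t)-\realclk_s(t_s)$ above simplifies to $\realclk_s(t_s)=\realclk_r(t)-f^+\tau_s/2$, hence $\realclk_r(t)-\realclk_r(t_s)=f^+\tau_s/2-(\realclk_r(t_s)-\realclk_s(t_s))$, and Eq.~\eqref{eq:no_overtake} at $t_s$ bounds $|\realclk_r(t_s)-\realclk_s(t_s)|\le N/2-f^+\max\{\tau_s,\tau_r\}$, so $\realclk_r(t)-\realclk_r(t_s)<N/2+f^+\tau_s/2$. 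Since the receiver accesses cell $\ell$ exactly when $\realclk_r$ passes the appropriate integer and the most recent such value below $\realclk_r(t)$ is precisely $\realclk_r(t)-(N/2+f^+\tau_s/2)$, no receiver access to cell $\ell$ falls in $(t_s,t]$; hence the most recent access is the sender's write at $t_s\le t-\tau_s$, and (F1) gives $F_\ell(t)=1$ as argued in the first paragraph. Case~(ii) runs through the mirror-image computation — now bounding $\realclk_s(t)-\realclk_s(t_r)$ via Eq.~\eqref{eq:no_overtake} at $t_r$ — to conclude the most recent access is the receiver's read at $t_r\le t-\tau_r$ of a valid cell, so (B4) and (F2) give $F_\ell(t)=0$. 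Finally, for the ``not yet accessed'' alternatives one checks, using (C1) ($\realclk_s(0),\realclk_r(0)\in(-\delta,0]$) and the alternation / first-access structure from the proof of Theorem~\ref{thm:correct}, that under the case-(i) hypotheses cell $\ell$ has necessarily already been written by the sender, whereas under the case-(ii) hypotheses an unaccessed cell $\ell$ must be one of the cells $N/2\le\ell<N$ that are invalid at time $0$; in either case the claimed value of $F_\ell(t)$ follows immediately from (F1) resp.\ (F2).
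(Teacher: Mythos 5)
Your proof is correct and follows essentially the same idea as the paper's: locate the last sender access to cell $\ell$, show it completed (i.e.\ happened at least $\tau_s$ earlier), and invoke Eq.~\eqref{eq:no_overtake} to rule out a subsequent receiver access, so that (F1) (resp.\ (F2) in case~(ii)) applies. Where you differ is in rigor rather than route. The paper's proof is quite terse: after the one-line inequality $c_s(t-\tau_s)\ge c_r(t)-f^+\tau_s/2$ it simply asserts ``the sender completed writing cell $\ell$ (for the most recent time)'' and that ``\eqref{eq:no_overtake} shows that neither sender nor receiver $\ldots$ accessed the cell again,'' without tracking the modular arithmetic or saying at which time \eqref{eq:no_overtake} is applied. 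You make both of these explicit: you compute $c_s(t)-c_s(t_s)$ and $c_r(t)-c_r(t_r)$ exactly, obtain $t-t_s\ge\tau_s$ from $c_s(t)-c_s(t_s)\ge f^+\tau_s$, and then — the step you correctly flag as the crux — apply \eqref{eq:no_overtake} \emph{at} $t_s$ to bound $|c_r(t_s)-c_s(t_s)|$ and deduce $c_r(t)-c_r(t_s)<N/2+f^+\tau_s/2=c_r(t)-c_r(t_r)$, i.e.\ $t_s>t_r$. This is the content the paper leaves implicit, and your version is the more convincing one.

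You also deal with a subtle $N/2$ offset that the paper glosses over. Taken literally with (B2) ($\sap=\realclk_s+N/2$), the hypothesis ``$c_r(t)\bmod N=\ell+f^+\tau_s/2$'' would place the receiver's pointer just past cell $\ell$ and the sender's pointer near cell $\ell+N/2$, which under case~(i) yields $F_\ell(t)\in\{0,\metas\}$, not $1$. The paper's own proof sidesteps this by writing $p_s(t-\tau_s)=c_s(t-\tau_s)$ (i.e.\ treating $p_s$ as $c_s$ without the $N/2$ shift), which is internally consistent within the proof but not with (B2). You instead keep $\sap=\realclk_s+N/2$ and absorb the discrepancy by reading $\ell$ as the sampled cell half a ring from the receiver's current position — effectively taking the hypothesis as $c_r(t)\bmod N=\ell+N/2+f^+\tau_s/2$ — and your subsequent calculations (e.g.\ $c_r(t)-c_r(t_r)=N/2+f^+\tau_s/2$, $c_s(t)-c_s(t_s)=f^+\tau_s/2+(c_s(t)-c_r(t))\in[f^+\tau_s,N/2)$, $c_s(t_s)=c_r(t)-f^+\tau_s/2$) are all consistent with that reading. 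For $N=2$ the offset is invisible, which is why the implementation and simulations are unaffected, but for general $N$ your version of the statement is the one that makes the lemma true as a statement about $F_\ell$. Your additional care in (a) using \eqref{eq:no_overtake} at all earlier times (justified, since it is the $\forall t$ consequence of Lemma~\ref{lem:bound}), (b) deriving (P1)/(P2)/alternation up front, and (c) disposing of the ``not yet accessed'' branches of (F1)/(F2) is all appropriate and fills gaps the paper leaves to the reader.
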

\begin{proof}
We show (i) first, i.e., assume that $c_s(t)\geq c_r(t) + f^+\tau_s/2$.
Then
\begin{equation*}
p_s(t-\tau_s)=c_s(t-\tau_s)\geq c_s(t)-f^+\tau_s \geq c_r(t) - f^+\tau_s/2\,.
\end{equation*}
Note that
\begin{equation*}
c_r(t) -f^+\tau_s/2 \bmod N = \ell\,,
\end{equation*}
i.e., the sender completed writing cell $\ell$ (for the most recent time) at time $t$;
here, \eqref{eq:no_overtake} shows that neither sender nor receiver cannot have accessed the cell again after the operation was complete.
In other words, $F_{\ell}(t)=1$, as claimed.

Now we show (ii).
Thus, we assume that $c_s(t)\leq c_r(t) - f^+\tau_s/2$, while also
\begin{equation*}
c_r(t) - f^+\tau_s/2 \bmod N = \ell\,.
\end{equation*}
Hence, the most recent access to cell $\ell$ was by the reader (again using also \eqref{eq:no_overtake}), which also completed its access (as $N\geq 2$ and we assume that operations are completed within a single clock cycle).
In other words, $F_{\ell}(t)=0$, as claimed.
\end{proof}

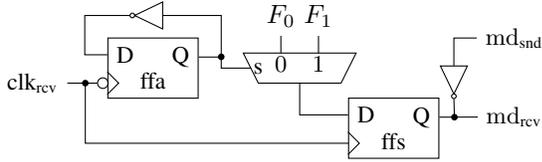
\begin{figure}[bt]
  \centering
  \begin{tikzpicture}[circuit logic US]

    % FFA
    \draw[draw] (0,0) node[draw=black,rectangle,minimum width=1.2cm,minimum height=.8cm] (ff1) {};
    \draw (ff1.west)++(-0.06,-.2) circle (2pt);
    \draw (ff1.north east) node[below left] (Q1) {\small Q};
    \draw (ff1.north west) node[below right] (D1) {\small D};
    \draw (ff1.south) node[above] (ffa-label) {\small ffa};
    \draw[] (ff1.west)++(0,-0.35) -- ++(0.15,0.15) -- ++(-0.15,0.15);

    % \node[inverter, fill=white, draw, rotate=-90, scale=0.3] (inv1) at (0,.7) {};
    \node[not gate, point left, scale=.5] (inv1) at (0,.7) {};
    \draw[] (Q1.east) -- ++(right: .3) |- (inv1.input);
    \draw[] (D1.west) -- ++(left: .3) |- (inv1.output);

    \draw[] (ff1.east)++(.3,.15) node[circle,fill=black,inner sep=0pt,minimum size=2.5pt] {} |- (1.5,0);

    % mux
    \draw[fill=white] (1.2,.2) -- ++(1.5,0) -- ++(-0.2,-0.4) -- ++(-1.1,0) -- cycle
      node[xshift=.5cm,yshift=-4pt] {\small $0$}
      node[xshift=1cm,yshift=-4pt] {\small $1$}
      node[xshift=.2cm,yshift=-.2cm] {\small s};

    \node (full0) at (1.7,.7) {$F_0$};
    \draw[] (full0) -- ++(down: .5);
    \node (full1) at (2.2,.7) {$F_1$};
    \draw[] (full1) -- ++(down: .5);

    % FFS
    \draw[draw] (3.2,-0.8) node[draw=black,rectangle,minimum width=1.2cm,minimum height=.8cm] (ff2) {};
    \draw (ff2.north east) node[below left] (Q2) {\small Q};
    \draw (ff2.north west) node[below right] (D2) {\small D};
    \draw (ff2.south) node[above] (ffa-label) {\small ffs};
    \draw[] (ff2.west)++(0,-0.35) -- ++(0.15,0.15) -- ++(-0.15,0.15);

    \draw[] (D2.west) -| (1.95,-.2);

    \node (clk) at (-1.6,-.2) {\small $\clkrcv$};
    \draw[] (clk) -- (-.74,-.2);
    \draw[] (clk) ++(right: .7) node[circle,fill=black,inner sep=0pt,minimum size=2.5pt] {} |- (2.6,-1);

    \draw[] (Q2.east) -- (4.8, -.65) node[fill=white] {\small $\mdrcv$};

    % \node[inverter, fill=white, draw, rotate=180, scale=0.3] (inv2) at (4,-.1) {};
    \node[not gate, point down, scale=.5] (inv2) at (4,-.1) {};
    \draw[] (Q2.east)++(right: .2) node[circle,fill=black,inner sep=0pt,minimum size=2.5pt] {} -- (inv2.output);
    \draw[] (inv2.input) |- (4.8, .4) node[fill=white] {\small $\mdsnd$};

  \end{tikzpicture}
  \caption{Controller \clocked\ for ring-size $N=2$. Flip-flop ffa stores the address (modulo $2$) that is sampled and
    ffs the sampled full/empty flag.}\label{fig:control_clocked}
\end{figure}

Based on this idea, we derive a straightforward implementation of the controller.
Put simply, the receiver samples the full/empty flag of the cell opposite to the one it currently reads in the ring.
More precisely, $\mdrcv$ is the output of a flip-flop (flip-flop `ffs' in  \figref{fig:control_clocked}), into which the receiver samples $F_{\ell}(t)$ at times $t$ such that $c_r(t) \bmod N = \ell + f^+\tau_s/2$.
Signal $\mdsnd$ is obtained by negating $\mdrcv$.
A circuit implementing this approach for ring size $N=2$ is shown in \figref{fig:control_clocked}.
Here, flip-flop ffa is a modulo $2$ counter used to track the address to the current cell to sample.
It is initialized to the opposite of the receiver address.
We need to ensure that the MUX switches to forwarding the respective full flag before
flip-flop ffs latches the output of the MUX. We do so by computing the select bit
on the negated clock signal. This shifts the computation of the select bit
by half a clock cycle and ensures correct timing. Note that here we might get
metastable mode signals due to switching full flags. %
% As by our assumption the
% voltage controlled oscillator operates in the defined frequency range the proof ensures correct behaviour of the link.
Naturally, it is necessary that the mode signal is computed within a single clock cycle;
given the simplicity of the circuit, this is easily achieved.

In the following, denote by $\tau_{\max}$ the maximum propagation time through the
circuit shown in \figref{fig:control_clocked} from the full/empty flags at the
top to $\mdsnd$ (without $\tau_s$, which is already taken into account by
Lemma~\ref{lem:clocked_ctrl}). Lemma~\ref{lem:clocked_ctrl} then characterizes the
proposed controller.

\begin{cor}\label{cor:clocked_ctrl}
Assume that the control circuit \clocked\ is used in accordance with Lemma~\ref{lem:clocked_ctrl} and that (P1) and (P2) hold until time
  $t > \Tctr = 1/s^-+\tau_{\max}$.
\begin{enumerate}[(i)]
  \item If for all $t'\in [t-\Tctr,t]$ we have that
  \begin{equation*}
  \realclk_s(t')-\realclk_r(t')\geq f^+\tau_s/2\,,
  \end{equation*}
  then $\mdrcv(t)=1$ and $\mdsnd(t)=0$.
  \item If for all $t'\in [t-\Tctr,t]$ we have that
  \begin{equation*}
  \realclk_r(t')-\realclk_s(t')\geq f^+\tau_s/2\,,
  \end{equation*}
  then $\mdsnd(t)=1$ and $\mdrcv(t)=0$.
\end{enumerate}
\end{cor}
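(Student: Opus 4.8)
The plan is to derive Corollary~\ref{cor:clocked_ctrl} from Lemma~\ref{lem:clocked_ctrl} by tracking a signal over one clock cycle: the flag value sampled into ffs, its propagation to $\mdrcv$ and (via the inverter) to $\mdsnd$, and the fact that the sampling condition $c_r(t)\bmod N=\ell+f^+\tau_s/2$ recurs at least once per cycle. I would prove only item (i); item (ii) is symmetric, obtained by swapping the roles of sender and receiver and using the inverter that produces $\mdsnd$ from $\mdrcv$.

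First I would argue that under the hypothesis of (i), the sampling condition guarantees that whenever ffs latches in the window $[t-\tau_{\max},t]$, it latches a stable $1$. Concretely, pick the most recent sampling event at some time $t^*$ with $c_r(t^*)\bmod N=\ell+f^+\tau_s/2$ for the appropriate $\ell$; since the receiver clock advances at rate at least $s^-$ by (C2)--(C5), consecutive such events are at most $1/s^-$ apart in wall-clock time, and the end-to-end delay bound $\Tctr=1/s^-+\tau_{\max}$ is exactly chosen so that there is a sampling event in $[t-\Tctr,t-\tau_{\max}]$ whose latched value has had time to propagate through the MUX-free path (the $\tau_{\max}$ budget) to $\mdsnd$ by time $t$. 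For that event, the hypothesis gives $\realclk_s(t^*)-\realclk_r(t^*)\ge f^+\tau_s/2$, and since (P1), (P2) hold up to time $t$, Eq.~\eqref{eq:no_overtake} is in force; hence Lemma~\ref{lem:clocked_ctrl}(i) applies and $F_\ell(t^*)=1$. I also need that the MUX forwards the correct flag at the latching instant: this is ensured by the design choice of computing the select bit on the negated clock (shifting it by half a cycle), so ffa's stable address value has reached the select input before ffs latches; this is a timing argument folded into the definition of $\tau_{\max}$.

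Next I would conclude that ffs holds a stable $1$ throughout $[t^*+\delta_{\text{ff}},t]$ for the appropriate flip-flop delay, so $\mdrcv(t)=1$; the inverter then gives $\mdsnd(t)=0$. The only subtlety is that a \emph{later} sampling event in $(t^*,t]$ could overwrite ffs with a non-$1$ value — but by choosing $t^*$ as the \emph{last} sampling event before $t-\tau_{\max}$ and noting the hypothesis holds on the whole interval $[t-\Tctr,t]\supseteq$ all relevant sampling times, every such event also satisfies Lemma~\ref{lem:clocked_ctrl}(i), so every latch in the window writes $1$. Stringing these together yields $\mdrcv(t)=1$, and I would remark that this matches the abstract interface requirement (L1) with $\Tctr=1/s^-+\tau_{\max}$, which is what lets Corollary~\ref{cor:clocked_ctrl} feed back into Theorem~\ref{thm:correct} via the choice $\thresh=f^+\tau_s/2$.

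The main obstacle is the bookkeeping around which sampling event is ``responsible'' for $\mdsnd(t)$ and showing no intervening event can spoil it: one must carefully split $\Tctr$ into the inter-sample spacing $1/s^-$ and the circuit delay $\tau_{\max}$, and confirm that the hypothesis interval $[t-\Tctr,t]$ is wide enough to cover every sampling time whose latched value is still visible at the output at time $t$. The rest — invoking Lemma~\ref{lem:clocked_ctrl}, invoking \eqref{eq:no_overtake} from (P1)/(P2), and propagating through the inverter — is routine once that timing decomposition is nailed down.
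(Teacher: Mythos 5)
Your proof is correct and takes essentially the same route as the paper: relate $\mdrcv(t)$, $\mdsnd(t)$ to a flag value latched into ffs at some time $t'\in[t-\Tctr,t]$ (using the $1/s^-$ spacing of sampling events plus the $\tau_{\max}$ propagation budget), verify the MUX select is stable via the half-cycle shift, apply Lemma~\ref{lem:clocked_ctrl} at that $t'$ under Eq.~\eqref{eq:no_overtake}, and propagate the stable bit through the inverter. Your explicit handling of later sampling events potentially overwriting ffs is bookkeeping the paper leaves implicit — it simply observes that whichever latching event is ``visible'' at time $t$ occurs within the hypothesis window, so the choice of responsible event never needs to be pinned down.
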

\begin{proof}
The outputs $\mdrcv(t)$ and $\mdsnd(t)$ at time $t$ are derived from the output of ffs at time $t$ (or one inverter delay earlier).
As the receiver clock runs at least at speed $s^-$ (by (C2)--(C4)), flip-flop ffs is latched at least every $1/s^-$ time.
Hence, taking into account the propagation time through the MUX and the definition of $\tau_{\max}$, the outputs correspond to the output of one of the flags at some time $t'\in [t-\Tctr,t]$.
As the MUX selects the flag output it forwards according to Lemma~\ref{lem:clocked_ctrl}, we can apply the lemma to time $t'$, yielding in Case~(i) that a stable $1$ is latched and in Case~(ii) that a stable $0$ is latched.
This results in the desired corresponding circuit outputs $\mdrcv(t)=1$ and $\mdsnd(t)=0$ (Case~(i)) or $\mdsnd(t)=1$ and $\mdrcv(t)=0$ (Case (ii)), respectively.
\end{proof}

We now can derive the correctness of the controller, expressed in Theorem~\ref{thm:clocked_correct}, conditional on simple constraints on~$\thresh$.

\begin{thm}\label{thm:clocked_correct}
Assume that Eq.~\eqref{eq:T} holds, where $\Tctr=1/s^-+\tau_{\max}$, and $\thresh \geq f^+\tau_s/2$.
Then \clocked\ is an implementation of $\cont(\thresh)$.
\end{thm}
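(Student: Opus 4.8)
The plan is to verify that the circuit \clocked, when fed the full/empty flags as inputs, satisfies the interface specification (L1)--(L2) of a link controller with threshold $\thresh$, i.e., that it realizes $\cont(\thresh)$. Property (L2) is vacuous (arbitrary outputs are always permitted), so the entire content lies in (L1). Corollary~\ref{cor:clocked_ctrl} already does almost all the work: it states that if $\realclk_s-\realclk_r \ge f^+\tau_s/2$ throughout $[t-\Tctr,t]$ then $\mdrcv(t)=1,\mdsnd(t)=0$, and symmetrically for the other direction. The gap between that statement and (L1) is (a) Corollary~\ref{cor:clocked_ctrl} is phrased with threshold $f^+\tau_s/2$ rather than $\thresh$, and (b) Corollary~\ref{cor:clocked_ctrl} carries the hypothesis that (P1) and (P2) hold up to time $t$, and that the circuit is ``used in accordance with Lemma~\ref{lem:clocked_ctrl},'' i.e., that the sampling offset and the no-overtake condition \eqref{eq:no_overtake} are in force.

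First I would observe that the hypothesis $\thresh \ge f^+\tau_s/2$ immediately bridges (a): if the controller specification of $\cont(\thresh)$ maps the inputs during $[t-\Tctr,t]$ to $\mdrcv=1$, then by Alg.~\ref{alg:controller} this means $\realclk_s(t')-\realclk_r(t') \ge \thresh \ge f^+\tau_s/2$ for all $t'\in[t-\Tctr,t]$, which is exactly the hypothesis of Corollary~\ref{cor:clocked_ctrl}(i); hence $\mdrcv(t)=1$ and $\mdsnd(t)=0$ as required by (L1). The case of output $0$ for $\mdrcv$ (and the symmetric statements for $\mdsnd$) follow identically from Corollary~\ref{cor:clocked_ctrl}(ii), using that $\mdsnd$ is the negation of $\mdrcv$.

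The step that needs the most care is (b): Corollary~\ref{cor:clocked_ctrl} is only applicable while (P1) and (P2) still hold, so to conclude correctness for \emph{all} times I would run an induction over time (equivalently, a minimal-counterexample argument). Assume for contradiction there is a first time at which (P1) or (P2) is violated; up to that point the link is correct, so in particular Theorem~\ref{thm:correct}'s conclusion — and with it Lemma~\ref{lem:bound} and the no-overtake bound \eqref{eq:no_overtake} — holds on the preceding interval, because Eq.~\eqref{eq:T} is assumed with $\Tctr = 1/s^- + \tau_{\max}$. That validates the hypotheses of both Lemma~\ref{lem:clocked_ctrl} and Corollary~\ref{cor:clocked_ctrl} on that interval, so \clocked\ genuinely implements $\cont(\thresh)$ there, i.e., satisfies (L1)--(L2) on that interval. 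But then the correctness proof of $\cont(\thresh)$ (Theorem~\ref{thm:correct}, whose argument only ever invokes (L1) on time intervals where the clock bounds already hold) shows the link cannot fail at the supposed first violation time — contradiction. Hence (P1) and (P2) hold at all times, the hypotheses of Corollary~\ref{cor:clocked_ctrl} are globally satisfied, and \clocked\ satisfies (L1)--(L2) globally, i.e., it is an implementation of $\cont(\thresh)$.

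The main obstacle is purely a matter of getting this bootstrapping ordered correctly: the circuit's correctness relies on the no-overtake bound, which relies on the link being correct, which relies on the circuit behaving like $\cont(\thresh)$ — a circular-looking dependency that is resolved cleanly only by the induction over time (or by noting that all three arguments are ``local'' in the sense that the conclusion at time $t$ depends only on behavior strictly before $t$, up to the fixed delays $\Tctr$ and $\Tosc$). I would also double-check the one genuinely quantitative point: that $\Tctr = 1/s^- + \tau_{\max}$ is consistent with the way $\Tctr$ enters Eq.~\eqref{eq:T} and Lemma~\ref{lem:bound}, so that the interval $[t_0+\Tctr, t]$ on which the mode signals are pinned in the proof of Lemma~\ref{lem:bound} is exactly the interval delivered by Corollary~\ref{cor:clocked_ctrl}; this is where the extra $1/s^-$ term (one receiver clock period, the worst-case staleness of ffs) must be accounted for, and it is already folded into the definition of $\Tctr$ in the theorem statement.
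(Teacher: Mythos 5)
Your high-level plan is the paper's: use $\thresh\ge f^+\tau_s/2$ to bridge the controller specification (L1) down to the hypothesis of Corollary~\ref{cor:clocked_ctrl}, then resolve the apparent circularity between (P1)/(P2), the clock-difference bound, and (L1) by a minimal-counterexample argument over time. However, there is a concrete ordering error in your bootstrap step that would stall a careful write-up. You assert that from ``(P1) and (P2) hold on $[0,\bar t)$'' one immediately obtains Lemma~\ref{lem:bound}'s clock-difference bound and \eqref{eq:no_overtake} on $[0,\bar t)$, and only afterwards conclude that Corollary~\ref{cor:clocked_ctrl} applies and hence that (L1) holds. But Lemma~\ref{lem:bound} is proved \emph{using} (L1): its proof pins $\mdsnd$ and $\mdrcv$ on $[t_0+\Tctr,t]$ via the controller specification. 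Properties (P1)/(P2) by themselves do not yield the clock bound or \eqref{eq:no_overtake} (the implication in Theorem~\ref{thm:correct} runs the other way, from the bound to (P1)/(P2)), so you cannot invoke Lemma~\ref{lem:bound} before establishing that \clocked\ satisfies (L1).

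The paper performs the same bootstrap but in the correct order. Let $\bar t$ be the first time (P1) or (P2) fails (or $\infty$). For each $t<\bar t$ the hypothesis of Corollary~\ref{cor:clocked_ctrl} --- namely that (P1)/(P2) hold until $t$ --- is satisfied directly by the choice of $\bar t$; together with $\thresh\ge f^+\tau_s/2$ this gives the required stable output in each of the two cases, so \clocked\ satisfies (L1)--(L2), i.e., implements $\cont(\thresh)$, on all of $[0,\bar t)$. \emph{Only then} does Lemma~\ref{lem:bound} become applicable on $[0,\bar t)$, yielding $|\realclk_s(t)-\realclk_r(t)|\le N/2-f^+\max\{\tau_s,\tau_r\}$ there; continuity of $\realclk_s,\realclk_r$ extends this bound to $\bar t$ itself; and rerunning the reasoning of Theorem~\ref{thm:correct} then shows (P1)/(P2) are not violated at $\bar t$, contradicting its definition. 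Your remaining steps (bridging the threshold, the continuity argument, and the final contradiction) are all consistent with the paper; once you swap the two links in the inference chain so that (L1) is established before the clock bound is claimed, the proof matches.
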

\begin{proof}
If there is some access to a valid cell by the sender or to an invalid cell by the reader, there must be a minimal such time (because the start of a cell access is a discrete event).
Denote by $\bar{t}$ the minimal such time if such an access occurs and set $\bar{t}$ to infinity otherwise.

We claim that the circuit implements $\cont(\thresh)$ at all times $0\leq t<\bar{t}$; from this we will infer the statement of the theorem.
Recall that by (L1) and (L2), the controller implementation needs to output a specific (and stable) signal only if the
condition in Line~3 or the one in Line~7 of Algorithm~\ref{alg:controller} continuously holds during the previous $\Tctr$ time.
According to Algorithm~\ref{alg:controller}, this is the case at time $t$ if and only
if $c_s(t')-c_r(t')\geq \thresh$ for all $t' \in [t-\Tctr,t]$ or $c_r(t')-c_s(t')\geq \thresh$ for all $t'\in [t-\Tctr,t]$.

Consider such a time $t$.
Note that $t>\Tctr$, as $|c_s(0)-c_r(0)|<\delta\leq \thresh$ by (C1) and Eq.~\eqref{eq:T}, i.e., neither condition is satisfied at time $0$.
We consider the two cases (i) $\realclk_s(t')-\realclk_r(t')\geq \thresh$ for all $t'\in [t-\Tctr,t]$ and (ii) $\realclk_r(t)-\realclk_s(t)\geq \thresh$ for all $t'\in [t-\Tctr,t]$.

 \medskip

 {\bf Case (i):} Since $\thresh \ge f^+\tau_s/2$, we may apply Case~(i) of Corollary~\ref{cor:clocked_ctrl}.
 We conclude that $\mdrcv(t)=1$ and $\mdsnd(t)=0$.

 \medskip

 {\bf Case (ii):} In this case we may apply Case~(ii) of Corollary~\ref{cor:clocked_ctrl}, from which we deduce that
 $\mdrcv(t)=0$ and $\mdsnd(t)=1$.

\medskip

We conclude that the circuit meets the specification at all times $t<\bar{t}$.
In particular, we can apply Lemma~\ref{lem:bound} at times $t<\bar{t}$,
  showing that $|c_s(t)-c_r(t)|\leq N/2-f^+\max\{\tau_s,\tau_r\}$.
If $\bar{t}\neq \infty$, continuity of $c_s$ and $c_r$ implies that also
  $|c_s(\bar{t}\,)-c_r(\bar{t}\,)|\leq N/2-f^+\max\{\tau_s,\tau_r\}$.
Reasoning analogously to the proof of Theorem~\ref{thm:correct}, it follows that (P1) and (P2) are not violated at
  times $t\leq \bar{t}$, contradicting the definition of $\bar{t}$.
We conclude that $\bar{t}=\infty$, implying that the circuit from \figref{fig:control_clocked}
  indeed implements $\cont(\thresh)$.
\end{proof}

Finally, we translate the theorem into a sufficient condition for correctness of the link implementation.
To state its performance, we define the \emph{latency} as the maximum time between consecutive accesses of the sender and receiver to the same cell, plus the setup/hold time at the receiver (as the data should be stable before it is used).
The \emph{throughput} is the guaranteed minimum rate of delivered packets; note that no packet drops or corruptions occur in our implementations.
\begin{cor}\label{cor:clocked_correct}
For $\Delta =\lceil(f^+-s^-)(\Tosc+1/s^-+\tau_{\max})+ f^+\max\{\tau_s,\tau_r\}+\max\{\delta,f^+\tau_s/2\}\rceil$
and $N\geq 2\Delta$, the given clocked link implementation is correct with latency $N/s^-$ and throughput $1/s^-$.
\end{cor}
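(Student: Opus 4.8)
The plan is to obtain the statement purely by assembling earlier results: Theorems~\ref{thm:clocked_correct} and~\ref{thm:correct} for correctness, and Lemma~\ref{lem:bound} together with the rate lower bound $\freqclk_r\geq s^-$ for the performance figures. The first step is to check that a feasible threshold exists; everything else then follows.

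Concretely, I would take $\thresh:=\max\{\delta,f^+\tau_s/2\}$. This obviously satisfies $\thresh\geq f^+\tau_s/2$ and the lower bound $\delta\leq\thresh$ of~\eqref{eq:T}. For the upper bound of~\eqref{eq:T} with $\Tctr=1/s^-+\tau_{\max}$, a rearrangement shows it is equivalent to $\thresh+(f^+-s^-)(\Tosc+1/s^-+\tau_{\max})+f^+\max\{\tau_s,\tau_r\}\leq N/2$, whose left-hand side is exactly the expression inside the ceiling defining $\Delta$; hence it is at most $\Delta\leq N/2$ by the hypothesis $N\geq 2\Delta$. (Since $\tau_s,\tau_r>0$ we have $\Delta\geq 1$, so $N\geq 2\Delta$ is consistent with the model's requirement of an even ring size $N>0$.) With~\eqref{eq:T} in force, Theorem~\ref{thm:clocked_correct} gives that \clocked\ implements $\cont(\thresh)$, and Theorem~\ref{thm:correct} gives that $\cont(\thresh)$ is correct; by Definitions~\ref{def:correctcont} and~\ref{def:correctsys} the link satisfies (P1) and (P2).

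For the latency, correctness now lets me apply Lemma~\ref{lem:bound} at all times, so that — exactly as in the proof of Theorem~\ref{thm:correct} — we have $0<\sap(t)-\rap(t)<N$ and, more sharply, $\sap(t)-\rap(t)\leq N-f^+\max\{\tau_s,\tau_r\}$ for all $t$, cf.~\eqref{eq:no_overtake}. Fix a sender access to a cell $\ell$ at time $t_s$ and let $t_r$ be the next access to $\ell$; by (P1), (P2) and the resulting strict alternation of sender and receiver accesses per cell (proof of Theorem~\ref{thm:correct}), this access is by the receiver and $t_r>t_s$. Writing $m:=\sap(t_s)\in\IZ$ (an access occurs only at integer pointer values) with $m\bmod N=\ell$, the bound $0<\sap(t_s)-\rap(t_s)<N$ yields $\realclk_r(t_s)=\rap(t_s)\in(m-N,m)$, so the receiver's next visit to cell $\ell$ occurs precisely when $\realclk_r$ reaches $m$; hence $\realclk_r(t_r)-\realclk_r(t_s)=\sap(t_s)-\rap(t_s)\leq N-f^+\max\{\tau_s,\tau_r\}$. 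As $\freqclk_r\geq s^-$ throughout (by (C2)--(C5)), this gives $t_r-t_s\leq(N-f^+\max\{\tau_s,\tau_r\})/s^-$, and adding the receiver's setup/hold time (bounded by $\tau_r\leq\max\{\tau_s,\tau_r\}$) and using $s^-\leq f^+$, the latency is at most $(N-f^+\max\{\tau_s,\tau_r\})/s^-+\max\{\tau_s,\tau_r\}\leq N/s^-$.

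For the throughput I would again use $\freqclk_r\geq s^-$: over every time window of length $1/s^-$ the value $\realclk_r$ grows by at least $1$, so the receiver performs at least one read in that window, and by (P1) together with the alternation of reads and writes per cell, each read delivers a fresh, stable data word; thus at least one packet is delivered per $1/s^-$ time, i.e., the throughput is $1/s^-$. The algebra in the correctness and throughput parts is routine; the main obstacle, such as it is, lies in the latency estimate — namely noticing that the pointer gap $\sap(t_s)-\rap(t_s)$ equals exactly how far $\realclk_r$ must still advance before the receiver reaches the just-written cell, and that the slack $f^+\max\{\tau_s,\tau_r\}$ furnished by~\eqref{eq:no_overtake}, together with $s^-\leq f^+$, is precisely what absorbs the extra setup/hold term and leaves the clean bound $N/s^-$.
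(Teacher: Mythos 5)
Your proof is correct and follows the same route as the paper: set $\Tctr=1/s^-+\tau_{\max}$, observe that a threshold $\thresh$ with $\max\{\delta,f^+\tau_s/2\}\le\thresh$ and the upper bound of~\eqref{eq:T} exists exactly because $N/2\ge\Delta$, then chain Theorems~\ref{thm:clocked_correct} and~\ref{thm:correct}. The paper disposes of latency and throughput with a one-line ``follow immediately from correctness and $\freqclk\ge s^-$''; your explicit pointer-gap computation for the latency (using~\eqref{eq:no_overtake} and $s^-\le f^+$ to absorb the setup/hold slack) is a sound and welcome filling-in of that step, not a different approach.
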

\begin{proof}
Set $\Tctr =1/s^-+\tau_{\max}$.
We choose $\thresh$ such that \eqref{eq:T} and $\thresh\geq f^+\tau_s/2$ are both satisfied.
This is possible if and only if $N/2\geq \Delta$, which holds by the prerequisites of the corollary.
Then Theorem~\ref{thm:clocked_correct} yields that the circuit from \figref{fig:control_clocked} indeed implements $\cont(\thresh)$, and Theorem~\ref{thm:correct} shows that the implemented controller is correct.

The performance bounds follow immediately from correctness and the fact that the guaranteed minimum clock rate is~$s^-$.
\end{proof}
% END \input{continuous.tex}
% BEGIN \input{simulations.tex}
\section{Performance Evaluation}\label{sec:sims}

We discuss a UMC 65\,nm ASIC design operating at roughly $2$\,GHz for which we
carried out simulations. This demonstrates that the derived performance bounds
indeed lead to promising results.

In the section, we also demonstrate simulated executions
  that show the circuit behaving according to the specification,
  despite reoccurring metastability of its control signals;
  see \figref{fig:vhdl}.
In fact metastability of the control signals
  is likely to be observed in an implementation, since by its attempt
  to synchronize the two oscillators, the controller repeatedly drives
  the control signals into metastability; much like experimental
  setups to measure deep metastability of
  synchronizers \cite{zhou2008chip,polzer2013approach}.
We would like to point out that any such demonstration, however, does not
  replace the correctness proofs in Section~\ref{sec:contTH}.
Proving that metastability is not a problem would require to
  verify the absence of metastability (or resulting effects)
  in all circuit components, except for the places to which our proofs
  show metastability to be confined.

\subsection{ASIC Implementation}

\begin{figure*}
  \begin{center}
    \includegraphics[width=.7\textwidth]{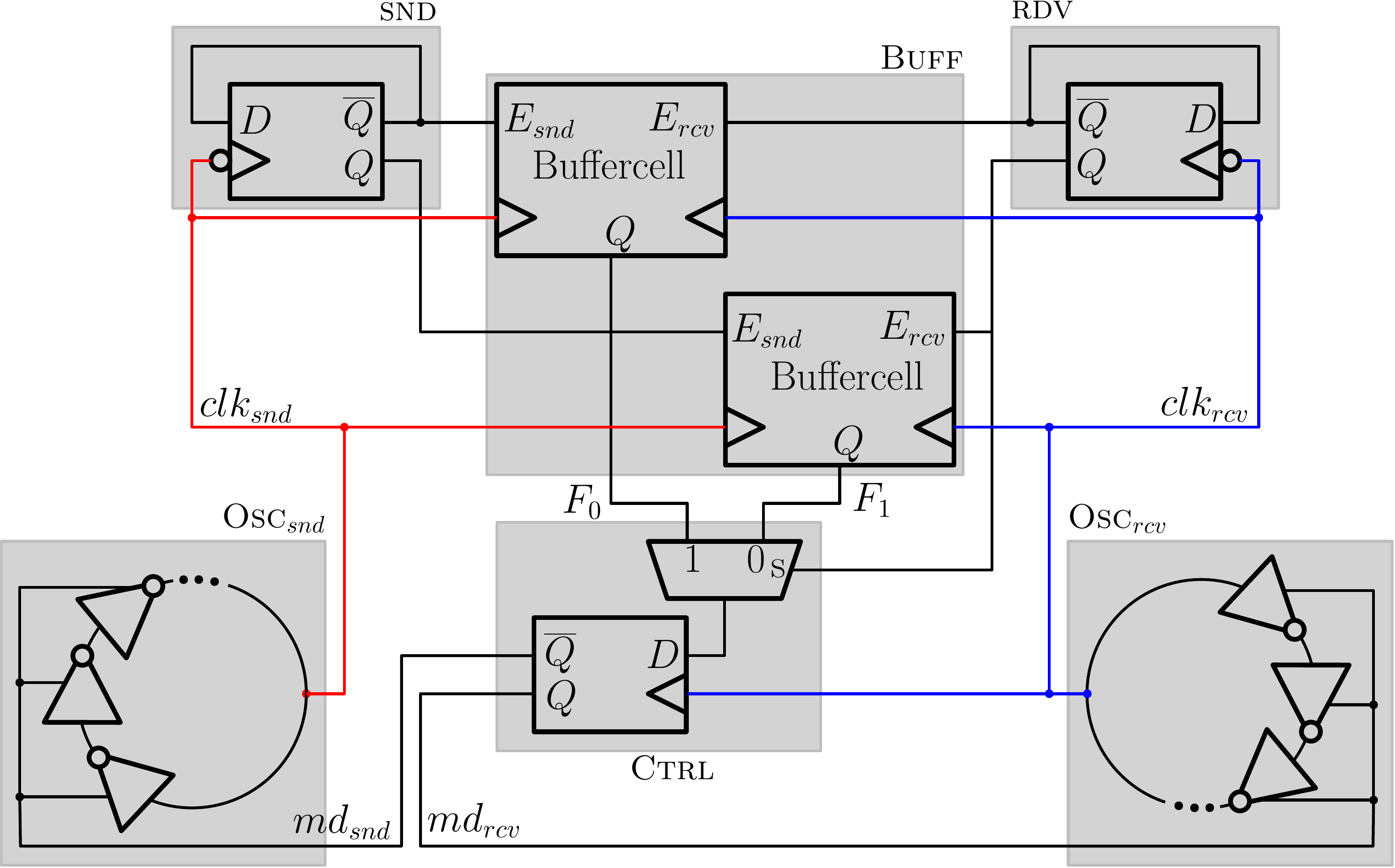}
  \caption{Implementation of the system with buffer size $N=2$. Clock regions are
  marked red (sender) and blue (receiver).}
  \label{fig:system}
  \end{center}
\end{figure*}

The complete design is shown in \figref{fig:system}. It comprises the digital controller ($\ctrl$),
tunable sender and receiver oscillator ($\Oscsnd$,$\Oscrcv$), and the ring buffer of size $N=2$.

\begin{figure}
  \begin{center}
  \includegraphics[height=4cm]{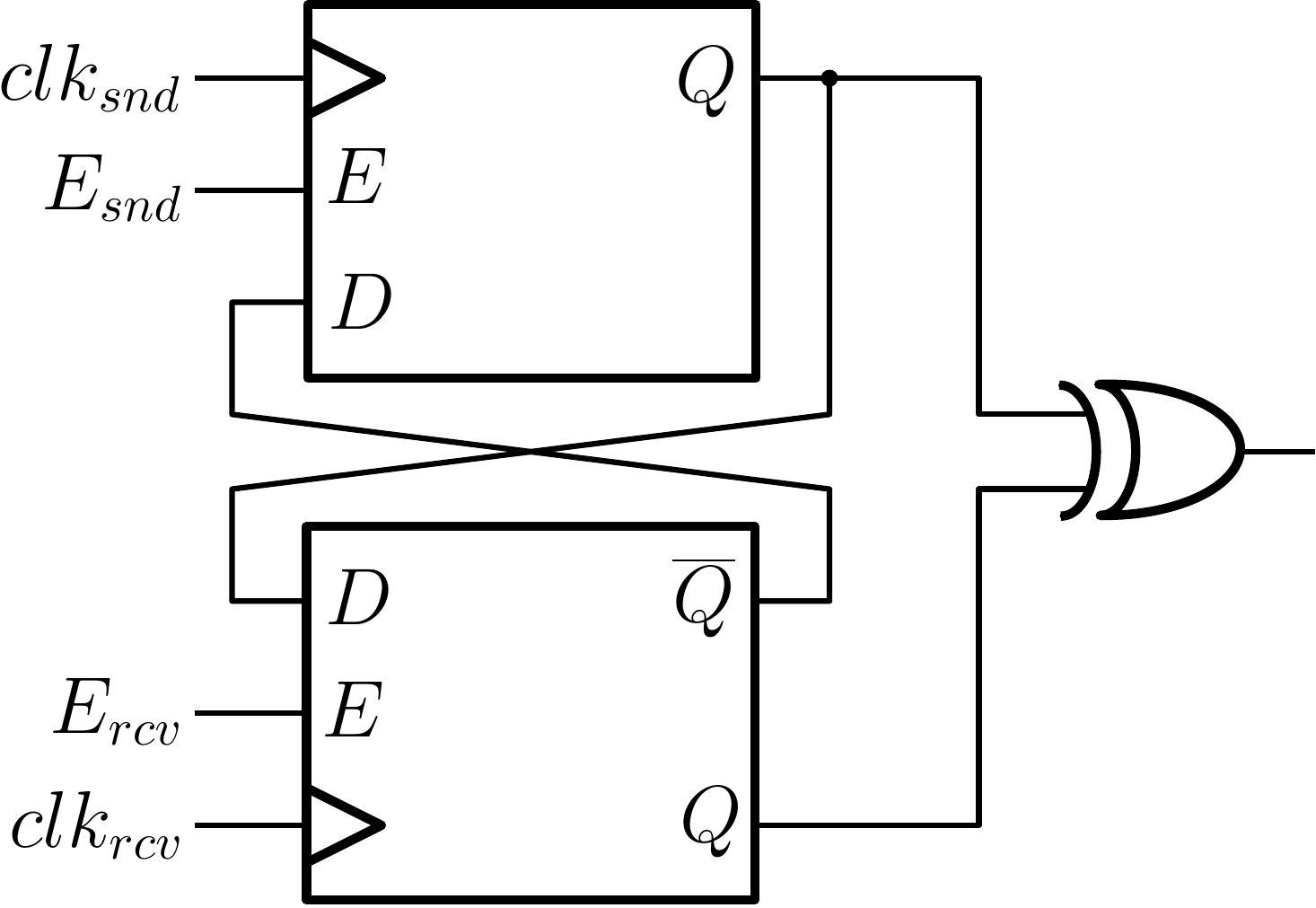}
  \caption{Implementation of a buffer cell that can only be set by the sender
  and only be reset by the receiver.}
  \label{fig:buffer cell}
  \end{center}
\end{figure}

At a buffer size of $2$ the address logic in $\snd$ and $\rcv$ reduces to a simple
modulo $2$ counter. Hence, we only have a single register for the sender and the
receiver side. The modulo counter operates on the negated clock to ensure
a stable output at the time a register in the buffer is accessed. The buffer consists of two buffer cells that store the
full/empty-flags. The design of a buffer cell that can be set to $1$ by one clock
domain and reset to $0$ by another clock domain is given in \figref{fig:buffer cell}.
The design uses a flip-flop for each clock domain that forwards its output to a
XOR which computes the output. If the sender flip-flop is enabled it copies the
negation of the receiver state. For differing states the XOR will output a $1$.
If the receiver enables its flip-flop the state of the sender is copied. Hence,
the output of the buffer cell is reset to $0$.

We can optimize the controller from \figref{fig:control_clocked}, as we already
compute the write address of the sender. We remove flip-flop ffa and
read the address from the $\snd$ address logic. The multiplexer in $\ctrl$
is connected such that we sample from the buffer cell that is currently not written by the sender.
The timing diagram in \figref{fig:timing_det} shows the behavior of $\ctrl$.

\begin{figure}
  \begin{center}
  \includegraphics[width=\columnwidth]{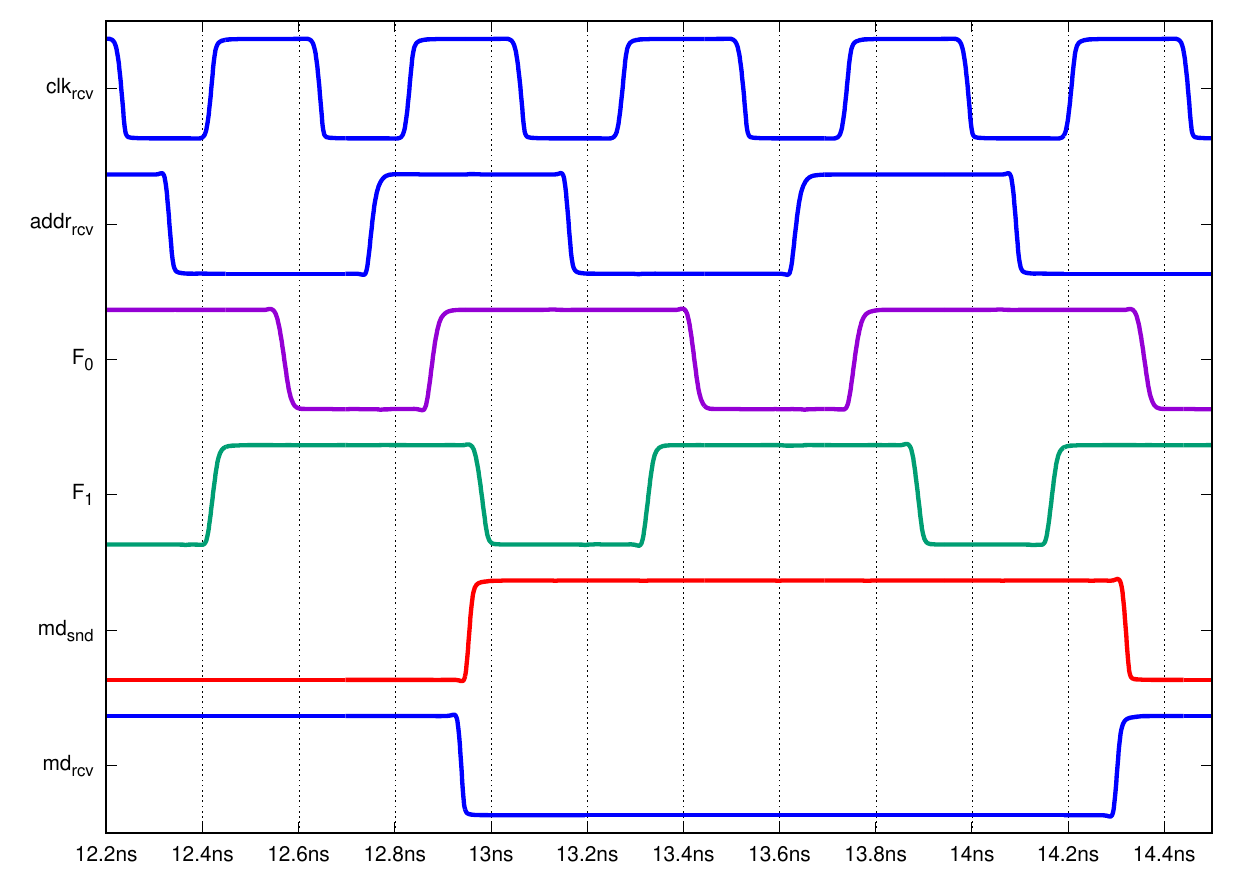}
  \caption{Timing diagram of the controller $\ctrl$. The address $addr_{rcv}$
  decides which full-flag is sampled into the register of the controller at a
  rising clock transition.}
  \label{fig:timing_det}
  \end{center}
\end{figure}

% Due to being purely digital, the {\em controller\/} and {\em ring buffer\/} VHDL design entries
%   and synthesis (Cadence Encounter RTL Compiler to UMC 65\,nm) followed a standard approach,
%   except that we manually checked that optimization preserves the circuit's containment properties. \joh{no more}

Recall that we require the sender and receiver {\em oscillators\/} to be well-behaved even when control bits are unstable.
Specifically, we require that (i) oscillator frequencies are always within
  $[s^-,f^+]$, and (ii) frequency mode changes occur within $\Tosc$ time ((C2) to (C5)).
This is why we resorted to starved-inverter ring oscillators that guarantee such
  behavior~\cite{hajimiri1999jitter}; we designed the sender and receiver starved inverter rings
  at transistor level following \cite{suman2016analysis}.
% In their design a delay cell is a starved inverter with 6 transistors.
Note that we do not need the full control logic overhead typically required to drive the starved inverter cells,
since we only need two speeds: slow and fast.
Hence, the control logic of the oscillator takes a single bit and adjusts the delay of the starved
inverters according to fast or slow mode.
As the receiver oscillator $\Oscrcv$ additionally drives the control logic $\ctrl$
its load is higher than the load driven by the sender oscillator. The
effect is that $\Oscrcv$ has slightly slower fast and slow modes. The difference
does not matter as long as the oscillator speeds lie within their theoretical
bounds. One can keep the imbalance very small by decoupling the oscillators from the load
with buffers.

Signals $\mdrcv$ and $\mdsnd$ are used as control signals of the rings, which run at roughly $2$\,GHz
  and $2.3$\,GHz for input $0$ and $1$, respectively.

Extracting delay and frequency parameters from the standard cell library we get
$\Delta=1$ in Corollary~\ref{cor:clocked_correct}, i.e., \clocked\ is provably correct for $N\geq 2$.
This fits to the bounds given in Table~\ref{table:circbounds}.

\subsection{Frequency Stability of Tunable Oscillators}

Typically, accuracy of oscillator frequencies is stated as a two-sided error,
i.e., if the nominal frequency of the oscillator is $f$ and it has a relative
frequency error of at most $r$, then at any time its momentary frequency is between $(1-r)f$ and $(1+r)f$.

Recall from (C5) that we require that the fast oscillator mode is always faster than the slow oscillator mode.
For a $2-2.3$\,GHz clock we must therefore tune the clock within an error $r$ that satisfies the condition $2\cdot(1+r)^2/(1-r)^2\leq 2.3$,
i.e., $r\leq 3.49$\% is a sufficient bound on the frequency error.
In case these error margins would be too restrictive, we could choose a clock with larger gap between fast and slow modes, e.g.,
$2-2.5$\,GHz.
Depending on the outcome of the timing analysis (see also Corollary~\ref{cor:clocked_correct}), this may require a larger buffer size $N$.

For comparison, the accuracy requirements for the oscillators used in \cite{dally2010even} are as follows.
If both the sender and receiver oscillator run at (roughly) the same nominal frequency, $\Delta p < g/S$ is proven to be sufficient for
correctness of the design, where $\Delta p$ is the relative phase change per clock cycle,
$S=4$ the number of synchronizer stages, and $g=0.1$ the guard band.
However, the proof assumes a perfectly stable receiver clock.
If receiver and sender oscillator may drift, the above inequality becomes $2 \Delta p (1+ \Delta p)<g/S$.
This is equivalent to a frequency error of less than $1.24$\%.

\subsection{Gate level and SPICE Simulations}

We first ran gate-level VHDL simulations of designs of our \clocked\ controller with delay and setup/hold parameters from the ASIC design.
The starved-inverter rings were simulated by forward Euler integration of a first order ODE model, where current clock rates
  are independently uniformly distributed in each integration step to account for drift.
The high respectively low frequency of the starved inverter rings where set to $2.3$\,GHz
respectively $2$\,GHz.
Potential in-/metastability of signals was simulated by \texttt{X} in a worst-case manner; this includes flip-flops
  with setup/hold violations, full/empty-flags, and oscillator mode signals.
Simulated traces were $5$\,ms ($10^7$ clock cycles) long and all in accordance with the proven correctness results.
We stress that signals $\mdrcv$ and $\mdsnd$ were unstable (\texttt{X}) almost all the time due to the conservative
  gate model assumptions, yet no buffer over-/underruns were encountered; cf.\ \figref{fig:vhdl}.

\begin{figure*}
  \includegraphics[width=1\textwidth]{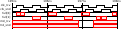}
  \vspace{-0.6cm}
  \caption{Gate-level simulations for link with \clocked.}
  \vspace{-0.3cm}
  \label{fig:vhdl}
\end{figure*}

We then ran Spice simulations for the \clocked\ design:
The design was implemented in Spice using standard cells and parameters of
the UMC 65\,nm library combined with an implementation of a tunable ring
oscillator. The oscillator runs at speed $2.09$\,GHz in slow mode and
$2.42$\,GHz in fast mode. Taking into account timing constraints and propagation
delays of the elements we can use a ring buffer of size two, according
to Corollary~\ref{cor:clocked_correct}.

When simulating the design for $500$\,ns (about $1100$ clock cycles) no faulty
behavior could be detected. However, the simulation confirms what we stressed
previously. In almost $50\%$ of the cases the setup time of $\operatorname{ffs}$ (see
\figref{fig:control_clocked}) is violated due to late transition of the
full flags. Still the controller behaves correct and the two oscillators run
synchronously.
% However, for the simulated time period this did not result in metastability.

% We initialize according to (B4) and set the pointers of sender and
% receiver side accordingly. The design stabilizes in roughly $10$ clock cycles to an
% equilibrium state where mode signals oscillate with a frequency of $0.57$\,GHz
% (see \figref{fig:md}).

% We show an excerpt of $4$\,ns from the simulation.
\figref{fig:full} shows
the full flags of the buffer. Sender and
receiver alternatingly access cell $0$ and cell $1$. \figref{fig:clk}
shows the clock signals produced by the sender and receiver oscillators. When
stabilized, the sender is ahead by slightly more than a clock cycle. Both run on average with a
frequency of roughly $2.28$\,GHz. \figref{fig:md} shows the mode signals of
the sender and receiver which are computed by \clocked.

\begin{figure*}
\centering
\begin{subfigure}[b]{\textwidth}
  \includegraphics[width=\textwidth,clip,trim=6 103 3 100]{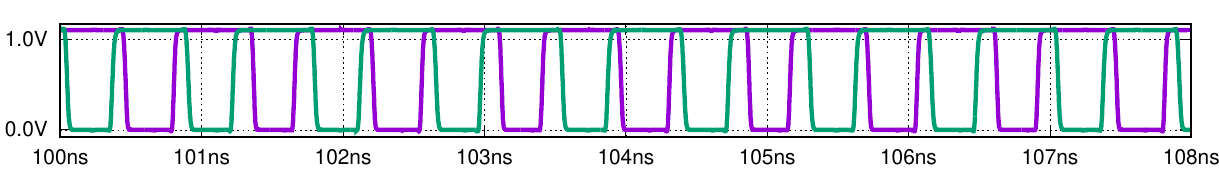}
  \caption{full flags}
  \label{fig:full}
\end{subfigure}
\begin{subfigure}[b]{\textwidth}
  \includegraphics[width=\textwidth,clip,trim=6 103 3 100]{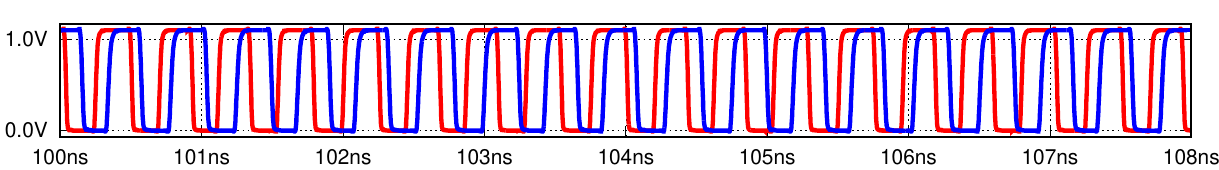}
  \caption{clock signals}
  \label{fig:clk}
\end{subfigure}
\begin{subfigure}[b]{\textwidth}
  \includegraphics[width=\textwidth,clip,trim=6 103 3 100]{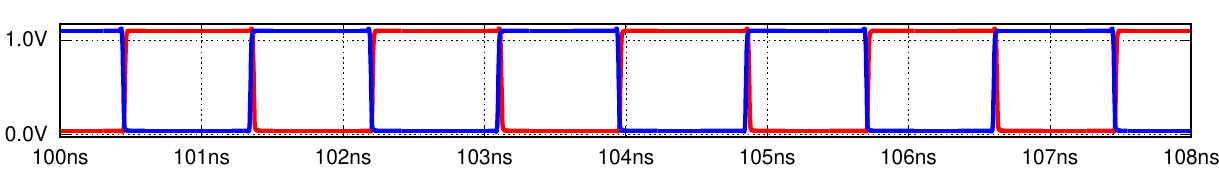}
  \caption{mode signals}
  \label{fig:md}
\end{subfigure}
\caption{Ring buffer with two cells. (a) Rising and falling full flags of cell $0$ (purple)
  or $1$ (green) show write and read access to the respective cell. (b) Clock signals of the sender (red) and receiver (blue) oscillator. When stabilized
  both run at $2.28$\,GHz on average. (c) The mode signals for sender (red) and receiver (blue) side alternate between
    fast ($2.42$\,GHz) and slow ($2.09$\,GHz) mode.}
\vspace{-0.5cm}
\end{figure*}

\subsection{Increasing Initialization Slack}\label{sec:slack}

If a sufficiently small $\delta$ (i.e., initial clock offset) cannot be guaranteed, the address pointers may ``collide''.
However, if the pointers move apart sufficiently far, the link will resume to operate as intended.
Note that the pointers colliding and moving at the same speed (i.e., the clocks running at the same speed) is an unstable equilibrium state, as the control logic aims at ``pushing'' them apart.
Accordingly, this is a metastable state of the link, which can be expected to resolve fairly quickly.

We used a variation of the Spice simulation that allows us to initialize sender
and receiver clocks to a specific offset (due to the the machinery simulation does
not start exactly at $0$\,ns). Together with a suitable initialization of the
full/empty flags, this simulates one of the clocks being started earlier.

We simulated the link with small initial offsets of the continuous pointers, i.e.,
$p_s(0)-p_r(0)=C_s(0)-C_r(0)+N/2\approx 0$, with the goal of finding a good tradeoff between
resolution time and precision of the initialization.
\figref{fig:phaseoffset} shows the pointer offset of the sender and the receiver
clock ($p_s(t)-p_r(t)-N/2$) over time $t$ for different initializations.
We see that simulations with an initial offset of $0$\,ps, $30$\,ps and $50$\,ps
stay in the metastable state until eventually the sender advances by one clock
cycle relatively to the receiver and the simulation stabilizes. Similarly, a
simulation with an initial offset of $-75$\,ps stays in the metastable state
until the receiver advances by one clock cycle relative to the sender and the
simulation reaches the corresponding stable state. Simulations with $30$\,ps
resp.\ $-75$\,ps offsets resolve after $11$\,ns resp.\ $10$\,ns. Hence, if the
designer is willing to wait $11$\,ps after initialization, it is sufficient to
guarantee avoiding this window of $105$\,ps during initialization. At the given
clock speed, this corresponds to a much larger $\delta = 1/f^+ - 105/2\,$ps,
which in our setting is roughly $360$\,ps. In general, waiting for a couple of
clock cycles after initialization increases the slack $\delta$ to being close to
a full clock cycle.

\begin{figure*}[h!]
  \includegraphics[width=\textwidth,clip,trim=6 70 3 70]{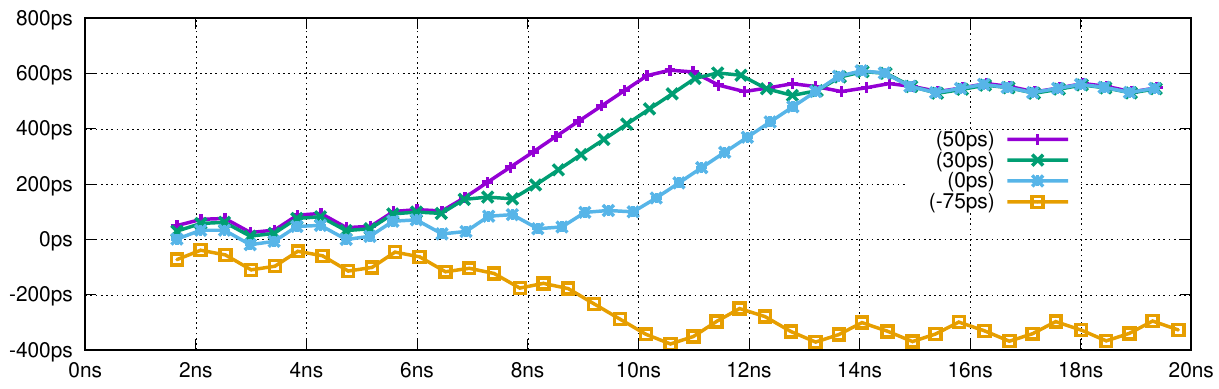}
  \vspace{-0.3cm}
  \caption{Offset of the sender and receiver pointers over simulation time. When initializing the
  pointer offset to $50$, $30$, $0$ and $-75$\,ps, we observe different times to stabilization. According
  to the analysis, setup and hold times cannot be violated once link is stabilized.}
  \vspace{-0.3cm}
  \label{fig:phaseoffset}
\end{figure*}
% END \input{simulations.tex}

\section{Conclusion}\label{sec:conclusion}
We provided a digitally controlled implementation of a synchronously accessed buffer-based link, where both sender and receiver each have their own tunable clock.
This can be seen as a distributed phase-locked loop, as we guarantee a fixed bound on the absolute time difference between the clocks, based on feedback derived from measuring the phase difference via keeping book of buffer accesses.
Our design is novel in that we neither rely on analog or asynchronous design nor incur synchronizer delay, yet deterministically guarantee correct operation.
By accepting an un- or metastable control signal for the oscillators when the buffer is roughly half full, we can completely dispense with synchronizers in the control loop.
As this eliminates the associated delay from the control loop, it leads to relaxed timing constraints compared to synchronizer-based solutions.
As a result, our link implementation can operate with a minimal buffer size of $2$ under fairly weak requirements on the frequency stability of oscillators, yet guarantee correctness deterministically.
We complemented our formal claims with VHDL and Spice simulations of UMC 65\,nm ASIC implementations.

For the link to operate correctly upon initialization, it may be the case that $\delta$, the initial clock offset between the sender and receiver clock, needs to be fairly small (cf.~Corollary~\ref{cor:clocked_correct}).
If the resulting constraint is too tight, one can violate this constraint, possibly resulting in the two address pointers ``meeting'' each other.
However, this is a metastable state of the control loop: If the two pointers move apart sufficiently far, operation will go back to the intended mode and push the pointers apart.
Note that once the link has stabilized, the resulting total clock difference between the sender and receiver clock is unknown.
One could now use the operational link to let the sender communicate its current clock value to the receiver (prefixing the encoding e.g.\ by a $1$, while the buffer cells where initialized to $0$).

However, the most practical compromise may be to avoid this complication and simply relax the initialization constraint without removing it entirely, as discussed in Section~\ref{sec:slack}. Simulating the link with varying initial pointer offsets, we demonstrated a reasonable tradeoff between the time the link stays in an unstable state (max $11$\,ns) and the precision of the initialization (in two clock cycles avoid a window of $105$\,ps), cf.~Figure~\ref{fig:phaseoffset}.

One limitation of the proposed system is that it is restricted to a single link.
% Naively designing a communication network using our proposed solution would require nodes to have a separate clock for each link, failing to transfer the data network nodes receive on different links into the same clock domain.
% In future work, we intend to address this by extending our techniques to run a clock synchronization algorithm.
 % in such a network, where each node is using a single tunable oscillator for all its incoming and outgoing links.
In follow-up work~\cite{async20pals}, the ideas presented here are combined with a gradient clock synchronization algorithm~\cite{kuhn10gradient,lenzen10tight} that tightly bounds the phase offset between adjacent nodes.
This retains the advantages of small buffers and latency while maintaining deterministic correctness.
Future work needs to flesh this concept out into a fully-fledged design, which subsequently is to be tested in silicon.
Here, suitable oscillators are more challenging to devise, because the scalability of the system is directly affected by the parameters of the oscillators.
Ultimately, the result will be an alternative approach to clocking synchronous systems with far better scalability properties than classic designs, which derive time from a single reference.

\section*{Acknowledgment}
We thank Attila Kinali and Prof.\ Ran Ginosar for pointing out that our link
controller is in fact an ``all-digital PLL.''

\bibliographystyle{IEEEtran}

% Generated by IEEEtran.bst, version: 1.12 (2007/01/11)

\vfill
\pagebreak

\begin{IEEEbiography}
[{\includegraphics[width=1in,height=1.25in,clip,keepaspectratio,trim=5 0 4 0]{./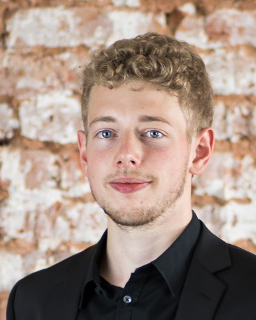}}]{Johannes~Bund}
is a Ph.\,D.\ student at the Algorithms and Complexity department at MPI for
Informatics. He graduated his B.\,Sc.\ and M.\,Sc.\ studies at the Saarland
Informatics Campus. In 2018 he joined Christoph Lenzen's group at MPI for
Informatics.
\end{IEEEbiography}

\begin{IEEEbiography}
[{\includegraphics[width=1in,height=1.25in,clip,keepaspectratio]{./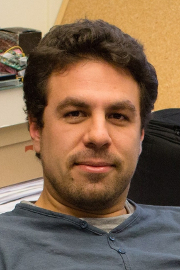}}]{Matthias~F\"ugger}
received his M.\,Sc.\ (2006) and his PhD (2010) in computer engineering from TU Wien, Austria.
He worked as an assistant professor at TU Wien, and as a post-doctoral researcher at LIX, Ecole polytechnique
and at MPI for Informatics.
Currently, he is a CNRS researcher at LSV, ENS Paris-Saclay, and
  co-leading the Digicosme group HicDiesMeus on Highly Constrained Discrete Agents for Modeling Natural Systems.
\end{IEEEbiography}

\begin{IEEEbiography}
[{\includegraphics[width=1in,height=1.25in,clip,keepaspectratio]{./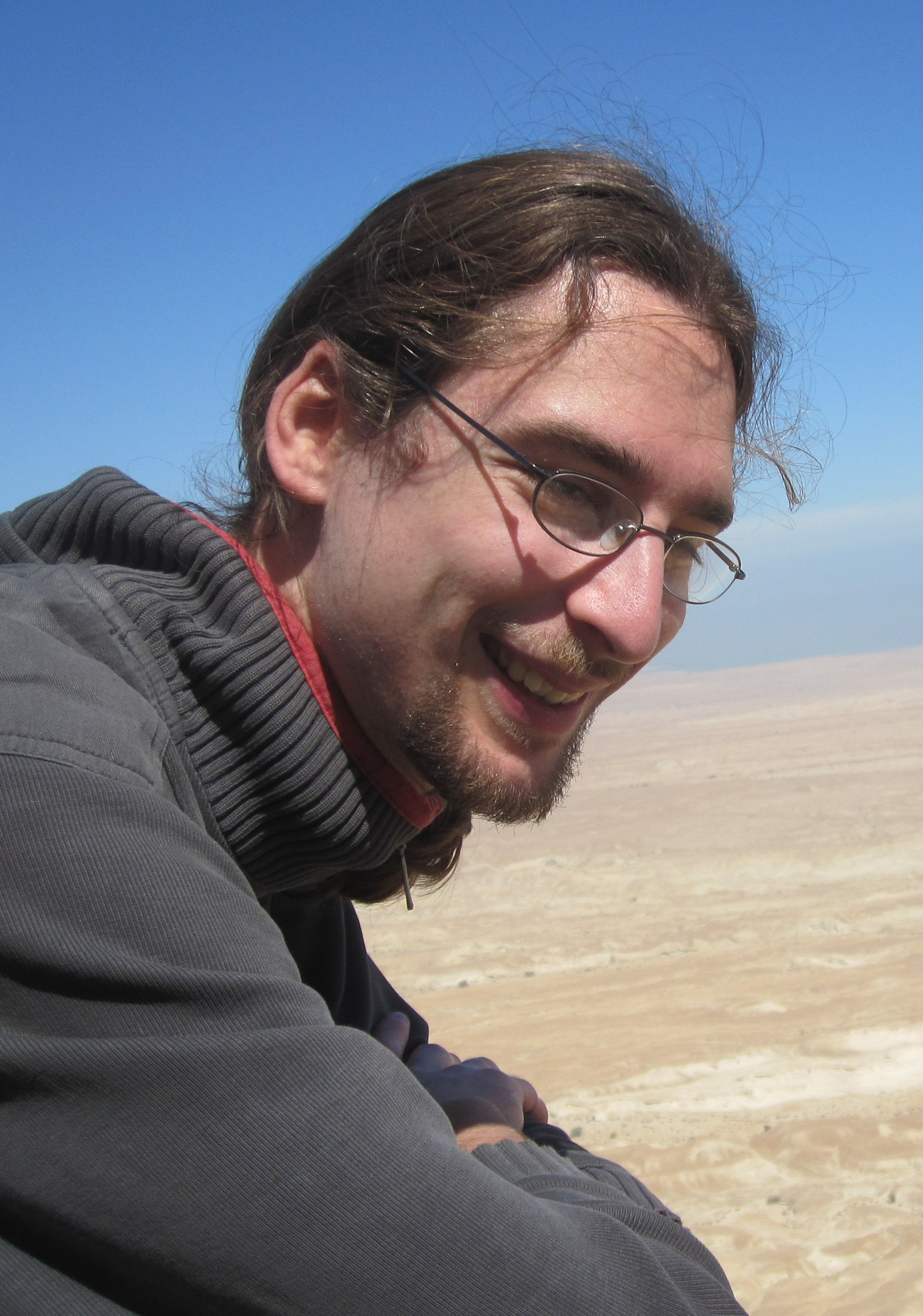}}]{Christoph~Lenzen}
received a diploma degree in mathematics from the University of Bonn in 2007 and a
Ph.\,D.\ degree from ETH Zurich in 2011. After postdoc positions at the Hebrew University of Jerusalem,
the Weizmann Institute of Science, and MIT, he became group leader at MPI for Informatics in 2014.
He received the best paper award at PODC 2009, the ETH medal for his dissertation, and in 2017 an ERC starting grant.
\end{IEEEbiography}

\begin{IEEEbiography}[{\includegraphics[width=1in,height=1.25in,clip,keepaspectratio]{./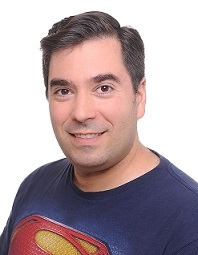}}]{Moti Medina}
is a faculty member in the School of Electrical \& Computer Engineering at
the Ben-Gurion University of the Negev since 2017. Previously, he was a post-doc
researcher in MPI for Informatics and in the Algorithms and Complexity group at
LIAFA (Paris 7). He graduated his Ph.\,D., M.\,Sc., and B.\,Sc.\ studies at the
School of Electrical Engineering at Tel-Aviv University, in  2014, 2009, and 2007
respectively. Moti is also a co-author of a  text-book on logic design
``Digital Logic Design: A Rigorous Approach'', Cambridge Univ. Press, 2012.
\end{IEEEbiography}

\vfill

\end{document}